\documentclass[11pt]{article}
\usepackage{amsmath, amssymb}
\usepackage[margin=1.25in]{geometry}
\usepackage{hyperref}
\usepackage{bbm}
\usepackage[authoryear]{natbib}
\usepackage{setspace}

\newtheorem{theorem}{Theorem}
\newtheorem{proposition}{Proposition}
\newtheorem{lemma}{Lemma}
\newtheorem{corollary}{Corollary}
\newtheorem{assumption}{Assumption}

\newtheorem{remark}{Remark}
\DeclareMathOperator{\rank}{rank}
\DeclareMathOperator{\spn}{span}
\DeclareMathOperator{\diag}{diag}
\newcommand{\Oa}{\Omega_{\alpha}}
\newcommand{\Og}{\Omega_{\gamma}}
\newcommand{\da}{d_{\alpha}}
\newcommand{\dg}{d_{\gamma}}
\newcommand{\Ltg}{L^{2}_{\pi}(\Og)}
\newcommand{\Hg}{\mathcal{H}_{\gamma}}
\newcommand{\Ha}{\mathcal{H}_{\alpha}}
\newcommand{\HgV}{\mathcal{V}^{\,d_G}}
\newcommand{\R}{\mathbb{R}}
\newcommand{\N}{\mathbb{N}}
\newcommand{\inner}[2]{\langle #1,\,#2\rangle}
\newcommand{\norm}[1]{\lVert #1\rVert}
\newenvironment{proof}[1][Proof]{\noindent \textbf{#1.} }{\  \rule{0.5em}{0.5em}}

\title{Nonparametric Identification of Two-Way Unobserved Heterogeneity}
\author{Hugo Freeman\thanks{
Department of Economics, Michigan State University, East Lansing, USA.
E-mail: freem391@msu.edu} \and Dennis Kristensen\thanks{
Department of Economics, University College London, London, UK. E-mail:
d.kristensen@ucl.ac.uk} }
\date{\today}

\begin{document}
\maketitle

\begin{abstract}
We study identification of two-way unobserved heterogeneity in the
nonparametric panel regression $G_{it}=g(\alpha_i,\gamma_t)+\varepsilon_{it}$,
where identification of the latent types reduces to constructing identified,
\emph{injective} proxies for them.  To this end we consider the singular value
decomposition (SVD) of the bivariate regression function
$g(\alpha,\gamma)$ on a product domain $\Oa\times\Og$, whose left singular
functions $\{u_r\}$ serve as proxies for the unobserved heterogeneity parameter
$\alpha$.
The arguments are symmetric for $\{v_r\}$ vis-\`a-vis $\gamma$.
We work under an \emph{observational-equivalence simplification}: two
values of $\alpha$ that induce the same conditional response
$g(\alpha,\cdot)$ are identified, so that the response map
$\alpha\mapsto g(\alpha,\cdot)$ is injective by construction.  We show two
things.  First, this reduction is \emph{equivalent} to injectivity of the full
collection of left singular eigenfunctions, so no further condition is needed
over the infinite collection $\{u_r\}_{r\ge1}$.  Second, under a single additional
\emph{local injectivity} condition, a finite collection of leading eigenfunctions
$U_R=(u_1^{\top},\dots,u_R^{\top})^{\top}$ is injective for all sufficiently large $R$.  The proof
reduces a global univalence question to a local first-order condition plus a
topological compactness argument, bypassing the global
Jacobian conditions usually required.
\end{abstract}

\section{Introduction}

Suppose we observe $Y_{it}\in\R^{d_Y}$ from
\begin{equation}
   Y_{it}=m\left(\alpha_i,\gamma_t,\eta_{it}\right),
   \qquad i=1,\dots,N,\quad t=1,\dots,T,
   \label{eq:DGP}
\end{equation}
where $\alpha_i\in\R^{\da}$ and $\gamma_t\in\R^{\dg}$ are unobserved
finite-dimensional fixed effects and $\eta_{it}$ is an idiosyncratic shock.
We are interested in identifying (transformations of) the realised values of
the fixed effects, together with functionals of the distribution of
$Y_{it}\mid(\alpha_i,\gamma_t)$, under large $N$--large $T$ asymptotics.  Our
identification relies on the fundamental requirement that there is a
known transformation $G:\R^{d_Y}\to\R^{d_G}$ such that variation in the
conditional mean of $G_{it}:=G(Y_{it})$ separates the types.  Writing
$g(\alpha_i,\gamma_t):=\mathbb{E}[G_{it}\mid\alpha_i,\gamma_t]$, this is the
nonparametric two-way regression
\begin{equation}
   \text{Model}:\qquad
   G_{it}=g\left(\alpha_i,\gamma_t\right)+\varepsilon_{it},
   \qquad \mathbb{E}\left[\varepsilon_{it}\mid\alpha_i,\gamma_t\right]=0,
   \label{eq:modelI}
\end{equation}
with $g$ a nonparametric object subject only to smoothness
conditions, no parametric restrictions on the distribution of
$(\alpha_i,\gamma_t)$, and the leading case being continuously distributed
types. 

This problem has received considerable attention:
\citet{bonhomme2021discretizing} developed first-step nonparametric
estimators of the fixed effects by discretisation based on
\eqref{eq:modelI}, used in a second step to estimate a parametric model for
$m$; \citet{beyhum2025inference} employed that first step for a
semiparametric model; \citet{freeman2023linear,freeman2026inference}
estimated the same model with alternative estimators of $g$ and
$(\alpha_i,\gamma_t)$.

This paper isolates the identification core of this program.  As
Section~\ref{sect:proxy} makes precise, identification of $g$ and the types
reduces to constructing \emph{identified and injective} proxies for
$\alpha_i$ and $\gamma_t$.  We construct such proxies from the singular value
decomposition (SVD) of the regression function $g$ and establish their
injectivity.

\section{Identification via Proxies}\label{sect:proxy}

The identification scheme rests on proxies that span the variation in
$g(\alpha_i,\gamma_t)$, as formalised in the following high-level assumption.

\begin{assumption}
\label{ass:proxyID}
(i) The distribution of $G_{it}\mid\{\alpha_i,\gamma_t\}$ is invariant with
respect to $(i,t)$; (ii) there exist $\lambda_i=U(\alpha_i)\in\R^{d_\lambda}$
and $f_t=V(\gamma_t)\in\R^{d_f}$, with $U,V$ measurable and
$d_\lambda,d_f<\infty$, that are
identified from the distribution of $\{Y_{it}:i\ge1,\,t\ge1\}$, and are injective; (iii) for any
given $(\alpha,\gamma)$ there exist $\delta=\delta(\alpha,\gamma)>0$ and
$C=C(\alpha,\gamma)<\infty$ such that,
whenever $\norm{(\alpha,\gamma)-(\alpha',\gamma')}<\delta$,
\[
   \norm{g(\alpha,\gamma)-g(\alpha',\gamma')}
   \le C\bigl\{\norm{\alpha-\alpha'}+\norm{\gamma-\gamma'}\bigr\}.
\]
\end{assumption}

Part (i) makes $g(\alpha_i,\gamma_t)=\mathbb{E}[G_{it}\mid\alpha_i,\gamma_t]$
invariant in $(i,t)$, while part (ii) lets the identified proxies
track the variation in $g$, with the injectivity component proved in this paper.
Part (iii) is a low-level Lipschitz condition over the primitive domain.
Note proxies $\lambda$ and $f$ are identified in \cite{freeman2023linear}. In the following we target identification of $g, \alpha,\gamma$ from these proxies. 

\begin{theorem}
\label{th:proxyID}
Under Assumption~\ref{ass:proxyID} (i)-(ii),
$g_0(\lambda_i,f_t):=\mathbb{E}[G_{it}\mid\lambda_i,f_t]$ satisfies
\[
   g_0(\lambda_i,f_t)
   =g(\alpha_i,\gamma_t)
   =g\bigl(U^{-1}(\lambda_i),V^{-1}(f_t)\bigr),
\]
so $g$ at the realised types is identified.  Moreover $g_0$ inherits the
modulus of Assumption~\ref{ass:proxyID}(iii) in the primitives since whenever
$\norm{(\alpha,\gamma)-(\alpha',\gamma')}<\delta(\alpha,\gamma)$,
\[
   \norm{g_0(\lambda,f)-g_0(\lambda',f')}
   \le C(\alpha,\gamma)\bigl\{\norm{\alpha-\alpha'}+\norm{\gamma-\gamma'}\bigr\},
\]
with $(\lambda,f)=(U(\alpha),V(\gamma))$ and
$(\lambda',f')=(U(\alpha'),V(\gamma'))$.  If in addition, for some $q\ge1$,
$g$ is $C^q$ and $U$ and $V$ are injective $C^q$ immersions, then $g_0$ is
$C^q$, differentiability being read on the image manifolds $U(\Oa)$ and
$V(\Og)$.
\end{theorem}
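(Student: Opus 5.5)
The argument has three parts, taken in the order of the statement.

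\emph{Conditional mean.} Since $U$ and $V$ are injective and measurable, the $\sigma$-algebras generated by $(\lambda_i,f_t)$ and by $(\alpha_i,\gamma_t)$ should coincide. The inclusion $\sigma(\lambda_i,f_t)\subseteq\sigma(\alpha_i,\gamma_t)$ is immediate from measurability. For the reverse inclusion I need $U^{-1}$ and $V^{-1}$, defined on the images, to be measurable. I would get this from the Lusin--Souslin theorem: an injective Borel map between Polish spaces, here Borel subsets of Euclidean spaces, sends Borel sets to Borel sets. Hence $U^{-1}:U(\Oa)\to\Oa$ is Borel and $U(\Oa)$ is a Borel set. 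With the two $\sigma$-algebras equal, we obtain
\[
\mathbb{E}[G_{it}\mid\lambda_i,f_t]=\mathbb{E}[G_{it}\mid\alpha_i,\gamma_t]=g(\alpha_i,\gamma_t).
\]
Assumption~\ref{ass:proxyID}(i) then makes the common function $g$ independent of $(i,t)$. Setting $g_0:=g\circ(U^{-1}\times V^{-1})$ on $U(\Oa)\times V(\Og)$ gives a single function of the proxies that does not depend on $(i,t)$. Because $(\lambda_i,f_t)$ is identified by (ii), $g_0$ is identified as a conditional expectation of observables, and therefore so is $g$ at the realised types.

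\emph{Lipschitz modulus.} This follows by substitution. From $g_0(U(\alpha),V(\gamma))=g(\alpha,\gamma)$ we have
\[
\norm{g_0(\lambda,f)-g_0(\lambda',f')}=\norm{g(\alpha,\gamma)-g(\alpha',\gamma')},
\]
and Assumption~\ref{ass:proxyID}(iii) bounds the right-hand side. Injectivity ensures that $(\alpha,\gamma)$ is uniquely determined by $(\lambda,f)$, so the bound is well defined as a statement about $g_0$.

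\emph{Smoothness.} I expect this to be the main obstacle. An injective immersion need not be an embedding, so $U^{-1}$ may fail to be continuous for the subspace topology on $U(\Oa)$. I would therefore read ``differentiability on the image manifold'' as being taken with respect to the manifold structure that $U$ transports from $\Oa$. In that structure $U:\Oa\to U(\Oa)$ is a diffeomorphism by construction, and $g_0=g\circ(U^{-1}\times V^{-1})$ is a composition of $C^q$ maps, hence $C^q$.

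If the subspace topology is wanted instead, I would argue locally. By the immersion theorem each $\alpha$ has a neighbourhood $W$ on which $U|_W$ is a $C^q$ embedding with a $C^q$ left inverse, obtained by projecting onto the coordinates in which $DU(\alpha)$ is invertible. Composing $g$ with these local inverses gives the $C^q$ property. This requires the image to be locally the image of such neighbourhoods, which holds when $\Oa$ is compact: a continuous injection from a compact space is a homeomorphism onto its image. I would add compactness as a remark if it is needed.
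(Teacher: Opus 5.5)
Your proposal is correct and follows the paper's proof. The paper also uses Lusin--Suslin to make $(\alpha_i,\gamma_t)$ a Borel function of $(\lambda_i,f_t)$, though it argues via the tower property and $\sigma(\lambda_i,f_t)$-measurability rather than equality of $\sigma$-algebras; it then gets the modulus by substitution and obtains $C^q$ from the fact that an injective $C^q$ immersion on a compact domain is a homeomorphism onto a $C^q$ submanifold with $C^q$ inverse, which is your compactness-plus-local-immersion argument, with compactness supplied by Assumption~\ref{ass:domain}.
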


\begin{proof}
Since $(\lambda_i,f_t)$ are identified they may be treated as observed, so
$g_0$ is identified, and \eqref{eq:modelI} with
Assumption~\ref{ass:proxyID}(i) gives
$g_0(\lambda_i,f_t)=\mathbb{E}[g(\alpha_i,\gamma_t)\mid\lambda_i,f_t]$.  By
injectivity in part (ii), every nonempty level set of $U$ or $V$ is a
singleton, hence trivially connected, and the joint level set
$S=U^{-1}(\lambda_i)\times V^{-1}(f_t)$ is the singleton
$\{(U^{-1}(\lambda_i),V^{-1}(f_t))\}$, and since
$\lambda_i=U(\alpha_i)$ and $f_t=V(\gamma_t)$, its sole element is
$(\alpha_i,\gamma_t)$ itself.  Thus $(\alpha_i,\gamma_t)$, and with it
$g(\alpha_i,\gamma_t)$, is a fixed function of $(\lambda_i,f_t)$; this
function is measurable, because a Borel injection between Borel subsets of
Euclidean spaces has Borel image and Borel inverse (the Lusin--Suslin
theorem), and it is continuous in the case relevant below, where $U$ and $V$
are continuous with compact domains.  The conditional expectation of a
$\sigma(\lambda_i,f_t)$-measurable variable is the variable itself, whence
$g_0(\lambda_i,f_t)=g(\alpha_i,\gamma_t)=g(U^{-1}(\lambda_i),V^{-1}(f_t))$.
Part (iii) is not used for this identity, and no constancy argument on $S$
is required.
The modulus is now immediate as the identity at both points and part (iii) give,
whenever $\norm{(\alpha,\gamma)-(\alpha',\gamma')}<\delta(\alpha,\gamma)$,
\[
   \norm{g_0(\lambda,f)-g_0(\lambda',f')}
   =\norm{g(\alpha,\gamma)-g(\alpha',\gamma')}
   \le C(\alpha,\gamma)\bigl\{\norm{\alpha-\alpha'}+\norm{\gamma-\gamma'}\bigr\}.
\]
For differentiability, an injective $C^q$ immersion on a compact domain is a
homeomorphism onto its image, a $C^q$ submanifold, with $C^q$ inverse there
(the rank theorem holds in the $C^q$ category for every $q\ge1$), so the
composition $g_0=g\circ(U^{-1},V^{-1})$ is $C^q$.
\end{proof}

We call $(\lambda_i,f_t)$ satisfying Assumption~\ref{ass:proxyID}
\emph{valid proxies}, and \emph{regular proxies} when $U$ and $V$ are in
addition smooth immersions.  Validity alone does not guarantee that the
normalised regression function $g_0$ inherits the smoothness of $g$;
regularity does (Theorem~\ref{th:proxyID}), and injectivity is its
nontrivial ingredient.  
Whatever common order of
differentiability $(g,U,V)$ possess transfers to $g_0$.  For the
eigenfunction proxies constructed below, that order is supplied by the
Sobolev order $p$ of Theorem~\ref{thm:primitive}. Functions $g\in H^{p}$ with
$p>q+(\da+\dg)/2$ delivers a $C^q$ kernel and $C^q$
singular functions, hence a $C^q$ normalised regression $g_0$.  

The
remainder of the paper constructs regular proxies
from the SVD of $g$, i.e. the eigenfunction proxies $\lambda_i=U_R(\alpha_i)$ and
$f_t=V_R(\gamma_t)$ of Sections~\ref{sect:setup} and~\ref{sec:gamma}, and establishes their
injectivity
pointwise almost
everywhere (Proposition~\ref{prop:ae})
and under stronger conditions, uniformly on the support for all sufficiently large finite $R$
(Theorem~\ref{thm:main}); the two regimes, joined with their $\gamma$-side
analogues into a single statement for the proxy pair $(U_R,V_R)$, are
collected up front in Theorem~\ref{thm:headline}.  When the kernel is additive across coordinates of
the types, the construction respects that structure and the normalised
regression decomposes componentwise, removing the curse of dimensionality in
the associated nonparametric regression
(Section~\ref{sec:additive}, Theorem~\ref{thm:additive}).

\section{Setup and Notation}\label{sect:setup}

Let $g$ be the two-way regression function of \eqref{eq:modelI}.
Throughout, $\Oa\subseteq\R^{\da}$ and $\Og\subseteq\R^{\dg}$ are the supports
of the two heterogeneity parameters, and the probability measures
$\pi_\alpha$ on $\Oa$ and $\pi_\gamma$ on $\Og$ have densities (with respect
to Lebesgue measure) bounded away from $0$ and $\infty$.  The kernel $g$ is
$\R^{d_G}$ valued, where $d_G$ is the dimension of the chosen transformation $G$,
applied to the full observed vector $Y_{it}$, and it is the only free
dimension on the response side, stacked moments or covariates being
accommodated by enlarging $G$ itself.

Notation $L^{2}$ means scalar valued, $\mathcal{H}$ means $\R^{d_G}$ valued, and the subscript names
the domain.  The subscript $\pi$ always denotes the measure of the indicated
domain, i.e. $\pi_\alpha$ on $\Oa$, $\pi_\gamma$ on $\Og$, and
$\pi_\alpha\otimes\pi_\gamma$ on $\Oa\times\Og$.  The scalar spaces then
need no further symbol. 
For the vector-valued spaces we abbreviate the $L^{2}_{\pi}$ mappings to $\R^{d_G}$ as
\[
   \Hg:=L^{2}_{\pi}(\Og;\R^{d_G}),
   \qquad
   \Ha:=L^{2}_{\pi}(\Oa;\R^{d_G}),
\]
with inner products
$\inner{\phi}{\psi}_{\Ltg}=\int_{\Og}\phi\psi\,d\pi_\gamma$ and
$\inner{\phi}{\psi}_{\Hg}=\sum_{j=1}^{d_G}\inner{\phi_j}{\psi_j}_{\Ltg}$, the
$\alpha$-side case being analogous.  These are the only spaces used in the
main text, and each hosts one kind of object. 
The scalar right singular
functions $v_r$ below live in $\Ltg$, the $\R^{d_G}$-valued left singular
functions $u_r$ and the slices $g(\cdot,\gamma)$ live in $\Ha$, and the
responses $g(\alpha,\cdot)$ live in $\Hg$. Mean-square convergence on the
product domain is convergence in $L^{2}_{\pi}(\Oa\times\Og;\R^{d_G})$.  We
write $\norm{\cdot}$ for the Euclidean norm on $\R^{k}$, for the Frobenius
norm of a matrix, and for the operator norm of an operator, the meaning fixed
by the argument.

\subsection{The integral operator and its SVD}

Let $g\in L^{2}_{\pi}(\Oa\times\Og;\R^{d_G})$, that is,
$\int_{\Oa}\int_{\Og}\norm{g}^2\,d\pi_\gamma\,d\pi_\alpha<\infty$.  Define
the Hilbert-Schmidt integral operator $T_g$
\begin{align}
   T_g:\Ltg\longrightarrow\Ha,
   \qquad
   (T_g\phi)(\alpha)=\int_{\Og} g(\alpha,\gamma)\,\phi(\gamma)\,d\pi_\gamma(\gamma)
   \ \in\R^{d_G},
   \label{eq:Tg}
\end{align}
and its associated dual $T_g^*$
\begin{align*}
   T_g^*:\Ha\longrightarrow\Ltg,
   \qquad
   (T_g^*\psi)(\gamma)=\int_{\Oa} g(\alpha,\gamma)^{\top}\psi(\alpha)\,d\pi_\alpha(\alpha).
\end{align*}
By the singular value decomposition of a compact operator between Hilbert
spaces, there exist a nonincreasing sequence of singular values
$\sigma_1\ge\sigma_2\ge\cdots\geq0$,
an orthonormal system $\{u_r\}_{r\ge1}\subset\Ha$ spanning
$\overline{\operatorname{range}(T_g)}$, and an orthonormal system
$\{v_r\}_{r\ge1}\subset\Ltg$ spanning $\overline{\operatorname{range}(T_g^*)}$,
such that by \cite{reed1980methods},
\begin{equation}
   g(\alpha,\gamma)=\sum_{r = 1}^\infty\sigma_r\,u_r(\alpha)\,v_r(\gamma),
   \label{eq:svd}
\end{equation}
with convergence in $L^{2}_{\pi}(\Oa\times\Og;\R^{d_G})$.
Functions $u_r/v_r$ are the \emph{left/right singular functions}; the left
ones are $\R^{d_G}$ valued and the right ones scalar.
Section~\ref{sec:gamma} studies identification of $\gamma$ where the role of $u_r$ and $v_r$ are reversed. 

The asymmetry in \eqref{eq:svd} reflects the choice of operator. 
The domain of $T_g$ is scalar and its codomain is $\R^{d_G}$ valued,
and each family of singular functions lives in this space.  Coordinatewise, \eqref{eq:svd} says
$g_j(\alpha,\gamma)=\sum_r\sigma_r u_{r,j}(\alpha)v_r(\gamma)$ for
$j=1,\dots,d_G$. 
At each rank $r$ a \emph{single} scalar $v_r$ is shared by
all $d_G$ components of the kernel, and the component-specific information
sits in the vector $u_r(\alpha)$.  The $v_r$ are eigenfunctions of
$T_g^{*}T_g$, whose kernel
$\int_{\Oa}g(\alpha,\gamma)^{\top}g(\alpha,\gamma')\,d\pi_\alpha$ pools
variation across components, so in general $v_r$ is not a singular function
of any single component $g_j$.  The pooling is what the proxy needs, and a pair
of time types separated by any one component is then separated by the common
system.  Section~\ref{sec:gamma} derives $\gamma$-side proxies through the adjoint, a distinct decomposition, with
the component index attached to $\gamma$ instead.

For $p\in\N$ and a bounded domain $\Omega$, $H^p(\Omega)$ denotes the order-$p$
$L^2$ Sobolev space under Lebesgue measure and, for a measure $\Pi$ on
$\Omega$ with density $\pi$, its weighted counterpart is
\[
   H^p(\Pi;\R^{k}):=\bigl\{u:\Omega\to\R^{k}\ :\ D^s u\in L^2(\Pi),\
   |s|\le p\bigr\},
   \qquad
   \norm{u}_{H^p(\Pi)}^2:=\sum_{|s|\le p}\int_\Omega\norm{D^s u}^2\,d\Pi ,
\]
where $L^2(\Pi)$ is $L^2$ with respect to joint measure $\Pi$. 
The scalar case $k=1$ being written $H^p(\Pi)$.
The two coincide under the standing bounds on the densities. In particular
\emph{no smoothness of the measures is required},
since $\pi_\alpha$ and $\pi_\gamma$ are only ever integrated against, not
differentiated.

\begin{lemma}
\label{lem:weighted}
If $0<c\le\pi\le C$ on $\Omega$, then $H^p(\Pi)=H^p(\Omega)$ as sets, with
\[
   c\,\norm{u}_{H^p(\Omega)}^2
   \;\le\;\norm{u}_{H^p(\Pi)}^2
   \;\le\;C\,\norm{u}_{H^p(\Omega)}^2 .
\]
This applies to $\pi_\alpha$, $\pi_\gamma$, and the product
$\pi_\alpha\otimes\pi_\gamma$, whose densities are bounded away from $0$ and
$\infty$.
\end{lemma}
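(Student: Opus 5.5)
The plan is to compare the two norms term by term. They involve the same weak derivatives and differ only in the measure used to integrate them. Suppose $u\in H^p(\Omega)$, or $u\in H^p(\Pi)$, and fix a multi-index $s$ with $|s|\le p$. Then $D^s u$ is a measurable function on $\Omega$, and the density satisfies $c\le\pi\le C$ pointwise. Integrating the nonnegative function $\norm{D^s u}^2$ against $\pi$ and using monotonicity of the integral gives
\[
   c\int_\Omega\norm{D^s u}^2\,dx
   \;\le\;\int_\Omega\norm{D^s u}^2\,\pi\,dx
   =\int_\Omega\norm{D^s u}^2\,d\Pi
   \;\le\;C\int_\Omega\norm{D^s u}^2\,dx .
\]
Summing over the finitely many $s$ with $|s|\le p$ yields the stated two-sided bound on the norms.

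Set equality follows directly from this bound. If $u\in H^p(\Omega)$, the right-hand inequality shows every $D^s u$ lies in $L^2(\Pi)$, so $u\in H^p(\Pi)$. Conversely, if $u\in H^p(\Pi)$, the left-hand inequality with $c>0$ gives $D^s u\in L^2(\Omega)$ for all $|s|\le p$, so $u\in H^p(\Omega)$.

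The one point needing care is what ``$D^s u$'' means in the weighted definition. I will read it as the weak derivative taken with respect to Lebesgue test functions on $\Omega$. This is legitimate because the density is bounded above and below, so $\Pi$ and Lebesgue measure have the same null sets. Moreover, $L^2(\Pi)\subset L^1_{\mathrm{loc}}(\Omega)$, since $c>0$ and $\Omega$ is bounded. Hence the notion of weak derivative, and the identification of functions up to null sets, is the same in both spaces. No derivative of $\pi$ ever appears, which is why no smoothness of the measure is needed.

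For the applications, $\pi_\alpha$ and $\pi_\gamma$ satisfy the hypothesis by the standing assumption, with constants $c_\alpha,C_\alpha$ and $c_\gamma,C_\gamma$. The product measure $\pi_\alpha\otimes\pi_\gamma$ has density $\pi_\alpha(\alpha)\pi_\gamma(\gamma)$ on the bounded set $\Oa\times\Og$, and this density lies in $[c_\alpha c_\gamma,\,C_\alpha C_\gamma]$. So the same argument applies with these constants. The vector-valued case $k>1$ needs nothing extra, since the Euclidean norm is applied pointwise before integrating. I expect no real obstacle; the only subtle step is the well-definedness of the weak derivatives just discussed.
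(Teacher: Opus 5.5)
Your proof is correct and follows the paper's argument: the density bounds sandwich each term $\int_\Omega\norm{D^s u}^2\pi\,dx$ between $c$ and $C$ times its Lebesgue counterpart, and summing over $|s|\le p$ gives both the norm equivalence and the set equality. Your extra remarks are sound refinements that the paper leaves implicit: the weak derivatives are the same notion under both measures, and the product density has explicit bounds $c_\alpha c_\gamma$ and $C_\alpha C_\gamma$.
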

\begin{proof}
Each term of the norm is $\int_\Omega|D^s u|^2\,\pi\,dx$ with $|s|\le p$, and
the bounds on $\pi$ sandwich it between $c\int|D^s u|^2dx$ and
$C\int|D^s u|^2dx$.  No derivative falls on $\pi$, so no regularity of $\pi$
is used.
\end{proof}

Lemma~\ref{lem:weighted} states that the sets induced by the weighed and non-weighted Sobolev spaces of eigenfunctions are the same, i.e. $H^p(\Pi)=H^p(\Omega)$, and that the weighting also induces equivalent norms over these spaces. 
Hence, eigenfunctions admitted by $H^p(\Pi)$, and related spectral decay, inherit the properties of the unweighted $H^p(\Omega)$, which have been studied thoroughly, see e.g. \cite{griebel2014approximation,griebel2019singular}. 
We collect results relevant to our setting in Appendix~\ref{sect:sobolev}. 

\subsection{Eigenfunction maps and truncations}

For $R\in\N\cup\{\infty\}$ define the (possibly infinite) \emph{left
eigenfunction map}
\begin{equation}
   U_R:\Oa\longrightarrow\R^{d_G R},
   \qquad
   U_R(\alpha):=\bigl(u_1(\alpha)^{\top},\dots,u_R(\alpha)^{\top}\bigr)^{\top},
   \label{eq:UR}
\end{equation}
with the convention $U_\infty(\alpha)=\{u_r(\alpha)\}_{r=1}^\infty$.  We call $U_R$ for finite $R$ the \emph{leading truncation} of order
$R$, since by \eqref{eq:svd} the components are ordered by decreasing singular
value.  The \emph{response map} is
\begin{equation}
   \Phi:\Oa\longrightarrow\Hg,
   \qquad
   \Phi(\alpha):=g(\alpha,\cdot).
   \label{eq:Phi}
\end{equation}
The truncated function is $g_R(\alpha,\gamma):=\sum_{r=1}^R\sigma_r u_r(\alpha)
v_r(\gamma)$ and the truncated response map is $\Phi_R(\alpha):=g_R(\alpha,\cdot)$.

\subsection{Gradient information matrices}

Define for
$R\in\N\cup\{\infty\}$ the \emph{gradient information matrix}
\begin{equation}
   M_R(\alpha):=\sum_{r=1}^R\sigma_r^2\,Du_r(\alpha)^{\top}Du_r(\alpha)
   \;\in\;\R^{\da\times\da},
   \label{eq:MR}
\end{equation}
where $Du_r(\alpha)\in\R^{d_G\times\da}$ is the Jacobian of $u_r$ at $\alpha$,
a row vector transposed to a gradient when $d_G=1$.  Each
summand is positive semidefinite, so $M_R(\alpha)\succeq0$ and
$R\mapsto M_R(\alpha)$ is nondecreasing in the positive semidefinite order.
The limit $M_\infty(\alpha):=\lim_{R\to\infty}M_R(\alpha)$ exists in the positive-semidefinite order.  
Under Assumption~\ref{ass:domain} it is finite and admits the closed form
\begin{equation}
   M_\infty(\alpha)
      =\int_{\Og}J(\alpha,\gamma)^{\top}J(\alpha,\gamma)\,d\pi_\gamma(\gamma),
   \qquad J(\alpha,\gamma):=D_\alpha g(\alpha,\gamma)\in\R^{d_G\times\da};
   \label{eq:Minf}
\end{equation}
this, and the underlying gradient expansion
$J(\alpha,\cdot)=\sum_r\sigma_r Du_r(\alpha)v_r$, converging entrywise in
$\Ltg$, is established in Lemma~\ref{lem:gradexp}.  We use $J$ and
$D_\alpha g$ interchangeably.

\section{Assumptions}

We collect sufficient conditions.  Assumption~\ref{ass:domain} is
standard, and we show $u_r$ is $C^1$ on $\Oa$ in Theorem~\ref{thm:primitive} under weak regularity.
Assumption~\ref{ass:obseq} is the observational-equivalence reduction.
Assumption~\ref{ass:ae-immersion} is used to show a finite set of
eigenfunctions inherits the injectivity property pointwise almost-everywhere,
and Assumption~\ref{ass:multi-index} is its uniform strengthening.
The price of the weaker Assumption~\ref{ass:ae-immersion} is that the result
is demoted to pointwise almost-everywhere.

\begin{assumption}[Domain and smoothness]
\label{ass:domain}
Support $\Oa\subseteq\R^{\da}$ is compact, convex, and connected, and
$\Og\subseteq\R^{\dg}$ is compact.  The function $g$ is continuously
differentiable in $\alpha$, with $D_\alpha g$ jointly continuous on
$\Oa\times\Og$; in particular each $u_r$ is $C^1$ on $\Oa$.
\end{assumption}

\begin{lemma}
\label{lem:Phi-cont}
Under Assumption~\ref{ass:domain}, $\Phi$ is Lipschitz from $\Oa$ into $\Hg$,
$\norm{\Phi(\alpha_1)-\Phi(\alpha_2)}_{\Hg}\le L\,\norm{\alpha_1-\alpha_2}$ with
$L=\sup_{\Oa\times\Og}\norm{J}<\infty$.  Hence
$\Delta(\alpha_1,\alpha_2):=\norm{\Phi(\alpha_1)-\Phi(\alpha_2)}_{\Hg}^2$ is
continuous on $\Oa\times\Oa$.
\end{lemma}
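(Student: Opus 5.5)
The plan is to apply the mean value theorem along the segment joining two points, pointwise in $\gamma$, and then integrate over $\Og$. Fix $\alpha_1,\alpha_2\in\Oa$. Since $\Oa$ is convex (Assumption~\ref{ass:domain}), the segment $\alpha_s:=\alpha_2+s(\alpha_1-\alpha_2)$, $s\in[0,1]$, stays in $\Oa$. For each fixed $\gamma\in\Og$ the map $s\mapsto g(\alpha_s,\gamma)$ is $C^1$, because $g$ is continuously differentiable in $\alpha$. The fundamental theorem of calculus, applied coordinatewise, therefore gives
\[
   g(\alpha_1,\gamma)-g(\alpha_2,\gamma)=\int_0^1 J(\alpha_s,\gamma)\,(\alpha_1-\alpha_2)\,ds ,
\]
and hence $\norm{g(\alpha_1,\gamma)-g(\alpha_2,\gamma)}\le \sup_{s}\norm{J(\alpha_s,\gamma)}\,\norm{\alpha_1-\alpha_2}\le L\,\norm{\alpha_1-\alpha_2}$. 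Here the Frobenius norm dominates the operator norm, so the bound holds with $L$ defined through $\norm{J}$ as in the statement.

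Next I would show $L<\infty$. The set $\Oa\times\Og$ is compact, since both factors are compact, and $J=D_\alpha g$ is jointly continuous by Assumption~\ref{ass:domain}. So $\norm{J}$ attains a finite maximum. Squaring the pointwise bound and integrating against the probability measure $\pi_\gamma$ gives
\[
   \norm{\Phi(\alpha_1)-\Phi(\alpha_2)}_{\Hg}^2=\int_{\Og}\norm{g(\alpha_1,\gamma)-g(\alpha_2,\gamma)}^2\,d\pi_\gamma\le L^2\norm{\alpha_1-\alpha_2}^2 ,
\]
where $\pi_\gamma(\Og)=1$ is used in the last step. This is the Lipschitz claim. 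One small check is that $\Phi(\alpha)\in\Hg$ at all. This follows because $g(\alpha,\cdot)$ is bounded: $g(\alpha,\gamma)=g(\alpha_0,\gamma)+\int_0^1 J\,ds\,(\alpha-\alpha_0)$, so it suffices that $g(\alpha_0,\cdot)$ lies in $\Hg$ for some $\alpha_0$. That holds for a.e.\ $\alpha_0$ by Fubini, since $g\in L^2_\pi(\Oa\times\Og;\R^{d_G})$, and the Lipschitz bound then transfers membership to every $\alpha$.

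For continuity of $\Delta$, note that $\Delta(\alpha_1,\alpha_2)=\norm{\Phi(\alpha_1)-\Phi(\alpha_2)}_{\Hg}^2$ is a composition of three continuous maps. The first is $(\alpha_1,\alpha_2)\mapsto(\Phi(\alpha_1),\Phi(\alpha_2))$, which is continuous into $\Hg\times\Hg$ because $\Phi$ is Lipschitz. The second is subtraction, and the third is the squared Hilbert norm. Alternatively, one can write
\[
   |\Delta(\alpha_1,\alpha_2)-\Delta(\alpha_1',\alpha_2')|\le(\sqrt{\Delta}+\sqrt{\Delta'})\,L\bigl(\norm{\alpha_1-\alpha_1'}+\norm{\alpha_2-\alpha_2'}\bigr),
\]
with both square roots bounded by $L\,\mathrm{diam}(\Oa)$, which gives continuity directly.

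I expect no step to be a genuine obstacle. The only delicate point is measurability and well-definedness of $\Phi(\alpha)$ for every $\alpha$, rather than almost every $\alpha$. I would handle it with the Fubini argument above, or simply note that continuity of $g$ on the compact product makes $g$ bounded.
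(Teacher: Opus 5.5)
Your proof is correct and follows the paper's argument: you write the increment in integral form along the segment (available by convexity), bound $J$ uniformly on the compact $\Oa\times\Og$, and integrate against $\pi_\gamma$, after which continuity of $\Delta$ is immediate. One small caution on your side remark: Assumption~\ref{ass:domain} makes only $D_\alpha g$ jointly continuous, not $g$ itself, so your Fubini argument, rather than the ``continuity of $g$'' shortcut, is the right way to secure $\Phi(\alpha)\in\Hg$ for every $\alpha$.
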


\begin{proof}
The increment admits the integral form $g(\alpha_1,\gamma)-g(\alpha_2,\gamma)
=\int_0^1 J(\alpha_2+t(\alpha_1-\alpha_2),\gamma)
(\alpha_1-\alpha_2)\,dt$; since $J$ is bounded on the compact
$\Oa\times\Og$, taking $\Hg$ norms gives the bound.
\end{proof}

\begin{lemma}
\label{lem:gradexp}
Under Assumption~\ref{ass:domain}, for every $\alpha\in\Oa$:
\begin{enumerate}
   \item[\textup{(i)}] the differentiated series reproduces the Jacobian,
         \begin{equation}
            J(\alpha,\cdot)=\sum_{r = 1}^\infty\sigma_r\,Du_r(\alpha)\,v_r,
            \qquad\text{with convergence entrywise in }\Ltg;
            \label{eq:gradseries}
         \end{equation}
   \item[\textup{(ii)}] consequently $\sum_{r\ge1}\sigma_r^2\norm{Du_r(\alpha)}^2
         \le\int_{\Og}\norm{J(\alpha,\gamma)}^2d\pi_\gamma<\infty$, the monotone
         limit $M_\infty(\alpha)=\sum_{r\ge1}\sigma_r^2 Du_r(\alpha)^{\top}
         Du_r(\alpha)$ is finite, and it equals the closed form
         \eqref{eq:Minf}.
\end{enumerate}
\end{lemma}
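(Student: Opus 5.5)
The plan is to identify each $u_r$ with a normalised image of $T_g$, differentiate under the integral sign to obtain $Du_r$, and then read the gradient series \eqref{eq:gradseries} as a Bessel/Parseval expansion of $J(\alpha,\cdot)$ in the orthonormal system $\{v_r\}$.

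\textbf{Representation of the singular functions.} For $\sigma_r>0$ the SVD gives $T_g v_r=\sigma_r u_r$, so
\[
   u_r(\alpha)=\sigma_r^{-1}\int_{\Og}g(\alpha,\gamma)v_r(\gamma)\,d\pi_\gamma(\gamma).
\]
We take this as the pointwise version of $u_r$ (the $\pi_\alpha$-a.e. class is unchanged). Terms with $\sigma_r=0$ contribute nothing to $M_R$ and can be dropped.

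\textbf{Differentiation under the integral.} By Assumption~\ref{ass:domain}, $J=D_\alpha g$ is jointly continuous on the compact set $\Oa\times\Og$, hence bounded, while $v_r\in\Ltg\subset L^1_\pi$. Differentiation under the integral is therefore justified, for instance via the mean-value/integral form used in Lemma~\ref{lem:Phi-cont} together with dominated convergence. This yields
\[
   \sigma_r Du_r(\alpha)=\int_{\Og}J(\alpha,\gamma)v_r(\gamma)\,d\pi_\gamma .
\]
Continuity of this expression in $\alpha$ follows from uniform continuity of $J$, which confirms that each $u_r$ is $C^1$. Entrywise, $\sigma_r\,\partial_k u_{r,j}(\alpha)=\inner{J_{jk}(\alpha,\cdot)}{v_r}_{\Ltg}$, so the coefficients of the series in (i) are exactly the Fourier coefficients of $J_{jk}(\alpha,\cdot)$ with respect to $\{v_r\}$.

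\textbf{Bessel's inequality and part (ii).} Bessel's inequality in $\Ltg$ gives
\[
   \sum_r\sigma_r^2|\partial_k u_{r,j}(\alpha)|^2\le\norm{J_{jk}(\alpha,\cdot)}^2_{\Ltg}.
\]
Summing over $(j,k)$ produces the bound in (ii), and hence finiteness of the monotone limit $M_\infty(\alpha)$. If $\int J^\top J\,d\pi_\gamma$ is written entrywise as $\sum_j\inner{J_{jk}}{J_{jl}}$, the closed form \eqref{eq:Minf} follows by polarisation from Parseval, once we know each $J_{jk}(\alpha,\cdot)$ lies in $\overline{\spn}\{v_r\}$. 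Indeed, $(M_\infty)_{kl}=\sum_j\sum_r\inner{J_{jk}}{v_r}\inner{J_{jl}}{v_r}$.

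\textbf{Main obstacle: completeness.} The key step is showing $J_{jk}(\alpha,\cdot)\in\overline{\operatorname{range}(T_g^*)}=(\ker T_g)^\perp$. Take $\phi\in\ker T_g$, so that $\int g(\alpha',\gamma)\phi(\gamma)\,d\pi_\gamma=0$ for $\pi_\alpha$-a.e. $\alpha'$. The left-hand side is continuous in $\alpha'$, since $g$ is continuous on the compact set and $\phi\in L^1$. The density of $\pi_\alpha$ is bounded below, so every open subset of $\Oa$ has positive measure, and the identity therefore holds for every $\alpha'\in\Oa$. Convexity of $\Oa$ lets us differentiate along directions inside the domain, including at boundary points via one-sided difference quotients along segments. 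Differentiating under the integral exactly as above then gives $\int J(\alpha,\gamma)\phi(\gamma)\,d\pi_\gamma=0$, i.e. $J_{jk}(\alpha,\cdot)\perp\phi$. Hence $J_{jk}(\alpha,\cdot)\in(\ker T_g)^\perp$, and this space is spanned by $\{v_r:\sigma_r>0\}$.

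With completeness in hand, Parseval upgrades Bessel's inequality to the expansion \eqref{eq:gradseries} with convergence in $\Ltg$, which is (i). It also turns the inequality in (ii) into an equality, giving the closed form \eqref{eq:Minf}.
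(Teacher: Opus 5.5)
Your proof is correct and follows essentially the paper's route. You differentiate under the integral to identify $\sigma_r Du_r(\alpha)$ with the Fourier coefficients of $J(\alpha,\cdot)$ against $\{v_r\}$, apply Bessel for (ii), and upgrade to Parseval by showing each entry of $J(\alpha,\cdot)$ lies in $\overline{\spn}\{v_r\}$. You prove that membership dually, by differentiating $\alpha'\mapsto\inner{g(\alpha',\cdot)}{\phi}\equiv0$ for $\phi\in\ker T_g$, whereas the paper places $g(\alpha,\cdot)$ in the closed subspace $\HgV$ and passes difference quotients to the limit there, which is the same idea. One small repair: Assumption~\ref{ass:domain} does not make $g$ itself jointly continuous or bounded, so the continuity in $\alpha'$ of that pairing should come from the Lipschitz bound of Lemma~\ref{lem:Phi-cont} plus Cauchy--Schwarz in $\Ltg$, not from continuity of $g$ together with $\phi\in L^1$.
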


\begin{proof}
It suffices to establish \eqref{eq:fourier}, and with it (i)--(ii), for
$\alpha$ in the interior of $\Oa$, which is dense in $\Oa$ (a compact convex
set of positive Lebesgue measure, guaranteed by the density bounds, equals
the closure of its interior); the extension to boundary points follows by
continuity, since $Du_r$ and $\alpha\mapsto D_\alpha g(\alpha,\cdot)$ are
continuous and the subspace $\mathcal{V}$ below is closed.
Fix, then, $\alpha$ in the interior of $\Oa$.  Because $D_\alpha g$ is
jointly continuous on the
compact $\Oa\times\Og$ it is bounded there, so differentiation under the integral
sign is justified in
$u_r(\alpha)=\sigma_r^{-1}\int_{\Og}g(\alpha,\gamma)v_r(\gamma)\,d\pi_\gamma$,
giving, entrywise,
\begin{equation}
   \int_{\Og}J(\alpha,\gamma)\,v_r(\gamma)\,d\pi_\gamma
      =\sigma_r\,Du_r(\alpha)\ \in\R^{d_G\times\da}.
   \label{eq:fourier}
\end{equation}
Fix an entry $(i,j)$.  By \eqref{eq:fourier},
$\{\sigma_r\,\partial_{\alpha_j}u_{r,i}(\alpha)\}_{r\ge1}$ are the Fourier
coefficients of $\partial_{\alpha_j}g_i(\alpha,\cdot)\in\Ltg$ against the
orthonormal system $\{v_r\}$, so the scalar Bessel inequality gives
\[
   \sum_{r\ge1}\sigma_r^2\,\bigl(\partial_{\alpha_j}u_{r,i}(\alpha)\bigr)^2
      \le\norm{\partial_{\alpha_j}g_i(\alpha,\cdot)}_{\Ltg}^2
      =\int_{\Og}\bigl(\partial_{\alpha_j}g_i(\alpha,\gamma)\bigr)^2\,d\pi_\gamma .
\]
Now sum over the $d_G\da$ entries.  Using the Frobenius identities
$\norm{Du_r(\alpha)}^2=\sum_{i,j}\bigl(\partial_{\alpha_j}u_{r,i}(\alpha)
\bigr)^2$ and $\norm{J(\alpha,\gamma)}^2
=\sum_{i,j}\bigl(\partial_{\alpha_j}g_i(\alpha,\gamma)\bigr)^2$, this gives
\[
   \sum_{r\ge1}\sigma_r^2\norm{Du_r(\alpha)}^2
      \le\int_{\Og}\norm{J(\alpha,\gamma)}^2\,d\pi_\gamma
      \le\sup_{\Oa\times\Og}\norm{J}^2<\infty,
\]
the final bound because $\pi_\gamma$ is a probability measure.  This is the
summability in (ii); in particular the monotone psd limit $M_\infty(\alpha)$ is
finite.

It remains to identify the limit of $J(\alpha,\cdot)$.  Let
$\mathcal{V}:=\overline{\spn}\{v_r:r\ge1\}\subseteq\Ltg$, an orthonormal basis of which
is $\{v_r\}$, and let
$\HgV:=\{\phi\in\Hg:\phi_i\in\mathcal{V}\text{ for each }i\}$, the
$d_G$-fold product of $\mathcal{V}$ and a closed subspace of $\Hg$.  Since the SVD
\eqref{eq:svd} converges in $L^{2}_{\pi}(\Oa\times\Og;\R^{d_G})$, we have
$g(\alpha,\cdot)\in\HgV$ for $\pi_\alpha$-almost every $\alpha$; as
$\alpha\mapsto g(\alpha,\cdot)$ is continuous into $\Hg$
(Lemma~\ref{lem:Phi-cont}) and $\HgV$ is closed, in fact
$g(\alpha,\cdot)\in\HgV$
for \emph{every} $\alpha$.  For a coordinate direction $e_j$ the difference
quotients $h^{-1}\bigl(g(\alpha+he_j,\cdot)-g(\alpha,\cdot)\bigr)$ lie in the
subspace $\HgV$ and, by the mean value theorem together with dominated
convergence (this is dominant since the constant
$\sup_{\Oa\times\Og}\norm{J}$ is integrable
since $\pi_\gamma$ is finite), converge in $\Hg$ to $\partial_{\alpha_j}
g(\alpha,\cdot)$.  As $\HgV$ is closed,
$\partial_{\alpha_j}g(\alpha,\cdot)\in\HgV$ and equals its orthogonal
projection onto $\HgV$, namely
$\sum_r\bigl(\int_{\Og}\partial_{\alpha_j}g(\alpha,\cdot)v_r\,d\pi_\gamma
\bigr)v_r$, which by \eqref{eq:fourier} is the $j$-th column of
$\sum_r\sigma_r Du_r(\alpha)v_r$.  Collecting the columns proves (i).
Finally, expanding $M_\infty(\alpha)=\sum_r\sigma_r^2 Du_r(\alpha)^{\top}Du_r(\alpha)$ and using \eqref{eq:gradseries} with orthonormality of $\{v_r\}$ in
$\Ltg$ yields
$M_\infty(\alpha)=\int_{\Og}J^{\top}J\,d\pi_\gamma$,
the closed form \eqref{eq:Minf}.
\end{proof}

\begin{remark}
\label{rem:gradseries}
Lemma~\ref{lem:gradexp} is the only convergence fact used below. 
It follows from
Assumption~\ref{ass:domain} alone and is weaker than uniform $C^1$ convergence of
\eqref{eq:svd}.
\end{remark}

\begin{assumption}[Observational Equivalence]
\label{ass:obseq}
Response map $\Phi$ in \eqref{eq:Phi} is injective by construction. That is, for all
$\alpha_1,\alpha_2\in\Oa$,
\[
   g(\alpha_1,\gamma)=g(\alpha_2,\gamma)\ \text{ for all }\gamma\in\Og
   \quad\Longrightarrow\quad
   \alpha_1=\alpha_2.
\]
\end{assumption}

\begin{remark}
\label{rem:obseq-justification}
Observationally equivalent values induce identical conditional responses and cannot be distinguished by
any estimator based on $g$.  Passing to $\Oa$ after removing observationally equivalent domain points is therefore
without loss and renders $\Phi$ injective by construction; we write $\Oa$ for the
restricted domain.
\end{remark}

\begin{assumption}
\label{ass:ae-immersion}
Degeneracy set
$Z:=\{\alpha\in\Oa:\det M_\infty(\alpha)=0\}$ has $\pi_\alpha(Z)=0$.
\end{assumption}

\noindent
Since $M_\infty(\alpha)\succeq0$ always, and a symmetric positive-semidefinite
matrix is positive definite if and only if its determinant is nonzero,
$\alpha\notin Z$ if and only if $M_\infty(\alpha)\succ0$:
Assumption~\ref{ass:ae-immersion} is equivalent to requiring
$M_\infty(\alpha)\succ0$ for $\pi_\alpha$-almost every $\alpha$.
It permits a rank drop on the $\pi_\alpha$-null set $Z$ (closed, by
continuity of $M_\infty$; Lemma~\ref{lem:Minf-uniform} below) and is the
condition of the almost-everywhere result, Proposition~\ref{prop:ae}.
Assumption~\ref{ass:multi-index} next is its uniform strengthening.

\begin{assumption}
\label{ass:multi-index}
There exists
$b_0=(\gamma_{0,1},\dots,\gamma_{0,p_\alpha})\in\Og^{p_\alpha}$ with
$\da \le d_G\,p_\alpha$ and $p_\alpha<\infty$ such that for every
$\alpha\in\Oa$ the stacked Jacobian
\[
   D_\alpha G_1(\alpha,b_0)
      :=\Bigl[\,D_\alpha g(\alpha,\gamma_{0,1})\,;\ \cdots\ ;\,
               D_\alpha g(\alpha,\gamma_{0,p_\alpha})\,\Bigr]
      \in\R^{d_G p_\alpha\times\da}
\]
has full column rank $\da$.
\end{assumption}

\begin{remark}
\label{rem:support}
Because $\Og$ and $\Oa$ are, by the convention of
Section~\ref{sect:setup}, the supports of
$\pi_\gamma$ and $\pi_\alpha$, the evaluation points automatically lie in the
supports; this is what transfers the finite-design full rank to $M_\infty$ in
\eqref{eq:Minf} (Lemma~\ref{lem:Minf-uniform}).  The condition is local (an
immersion condition), and is stronger than $M_\infty(\alpha)\succ0$
pointwise, since one design $b_0$ must deliver rank $\da$ uniformly in
$\alpha$.  Only this $\alpha$-side condition enters
Theorem~\ref{thm:main}; the symmetric $\gamma$-side condition, on a stacked
Jacobian $D_\gamma G_2(a_0,\gamma)$, is stated as the $\gamma$-analogue in
Section~\ref{sec:gamma}, where it is first used.
\end{remark}

\begin{remark}
\label{rem:independence}
Assumptions~\ref{ass:obseq} and~\ref{ass:multi-index} are logically independent.  With $\da=1$, $\Oa=[-1,1]$, and
$g(\alpha,\gamma)=\alpha^3 h(\gamma)$ (with $h:\Og\to\R^{d_G}$ nonconstant), $\Phi$ is injective
(Assumption~\ref{ass:obseq} holds) yet $D_\alpha g(0,\gamma)\equiv0$, so
Assumption~\ref{ass:multi-index} fails at $\alpha=0$.  Conversely, an everywhere
full-rank Jacobian may self-intersect globally, so
Assumption~\ref{ass:multi-index} does not imply Assumption~\ref{ass:obseq}.
\end{remark}

\begin{lemma}
\label{lem:Minf-uniform}
Under Assumption~\ref{ass:domain}, $\alpha\mapsto M_\infty(\alpha)$ is continuous on
$\Oa$.  If in addition Assumption~\ref{ass:multi-index} holds, then
$M_\infty(\alpha)\succ0$ for every $\alpha\in\Oa$ and, where $\lambda_{\min}$ denotes minimum eigenvalue,
\begin{equation}
   c_0:=\min_{\alpha\in\Oa}\lambda_{\min}\bigl(M_\infty(\alpha)\bigr)>0,
   \qquad
   M_\infty(\alpha)\succeq c_0 I_{\da}\ \text{ for all }\alpha\in\Oa.
   \label{eq:c0}
\end{equation}
\end{lemma}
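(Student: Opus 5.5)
We prove the two claims in sequence: continuity first, then positive definiteness, then the uniform bound. Throughout we work with the closed form \eqref{eq:Minf} from Lemma~\ref{lem:gradexp}(ii), so that $M_\infty(\alpha)=\int_{\Og}J(\alpha,\gamma)^{\top}J(\alpha,\gamma)\,d\pi_\gamma(\gamma)$.

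\emph{Continuity.} By Assumption~\ref{ass:domain}, $J$ is jointly continuous on the compact set $\Oa\times\Og$. It is therefore bounded by $L=\sup\norm{J}$ and uniformly continuous. Hence $\sup_{\gamma}\norm{J(\alpha',\gamma)^{\top}J(\alpha',\gamma)-J(\alpha,\gamma)^{\top}J(\alpha,\gamma)}\le 2L\sup_\gamma\norm{J(\alpha',\gamma)-J(\alpha,\gamma)}\to0$ as $\alpha'\to\alpha$. Integrating against the probability measure $\pi_\gamma$ gives $\norm{M_\infty(\alpha')-M_\infty(\alpha)}\to0$.

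\emph{Positive definiteness.} Fix $\alpha$ and a unit vector $x\in\R^{\da}$. Then $x^{\top}M_\infty(\alpha)x=\int_{\Og}\norm{J(\alpha,\gamma)x}^2\,d\pi_\gamma$. Since $D_\alpha G_1(\alpha,b_0)$ has full column rank, some $k$ satisfies $J(\alpha,\gamma_{0,k})x\neq0$. The map $\gamma\mapsto\norm{J(\alpha,\gamma)x}^2$ is continuous, so it exceeds $\eta>0$ on a relatively open neighbourhood $B$ of $\gamma_{0,k}$ in $\Og$. Because $\Og$ is the support of $\pi_\gamma$ and $\gamma_{0,k}\in\Og$ (Remark~\ref{rem:support}), we have $\pi_\gamma(B)>0$. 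Therefore $x^{\top}M_\infty(\alpha)x\ge\eta\,\pi_\gamma(B)>0$, and so $M_\infty(\alpha)\succ0$.

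\emph{Uniform bound.} The function $\alpha\mapsto\lambda_{\min}(M_\infty(\alpha))=\min_{\norm{x}=1}x^{\top}M_\infty(\alpha)x$ is continuous. Indeed, eigenvalues of symmetric matrices are $1$-Lipschitz in operator norm (Weyl), and $M_\infty$ is continuous by the first step. On the compact set $\Oa$ this function therefore attains its minimum $c_0$ at some $\alpha^*$. By the previous step $c_0=\lambda_{\min}(M_\infty(\alpha^*))>0$. Consequently $M_\infty(\alpha)\succeq\lambda_{\min}(M_\infty(\alpha))I\succeq c_0I$ for every $\alpha\in\Oa$, which is \eqref{eq:c0}.

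The only delicate point is the support argument: positivity at the single design point $\gamma_{0,k}$ has to be transferred to positive $\pi_\gamma$-mass. This relies on $\Og$ being the support, so that every relatively open neighbourhood of a point of $\Og$ has positive measure. The remaining steps are routine compactness arguments.
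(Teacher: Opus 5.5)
Your proof is correct and follows essentially the same route as the paper. You get continuity from the closed form \eqref{eq:Minf} and the joint (uniform) continuity of $J$, and positive definiteness by transferring full rank at a design point $\gamma_{0,k}$ to positive $\pi_\gamma$-mass via $\Og=\operatorname{supp}\pi_\gamma$; you argue directly with a positive-mass neighbourhood where the paper argues by contradiction from $\pi_\gamma$-a.e.\ vanishing, which is the same step. The floor $c_0$ then follows, as in the paper, from continuity of $\lambda_{\min}(M_\infty(\cdot))$ on the compact $\Oa$.
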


\begin{proof}
Continuity follows from \eqref{eq:Minf} (Lemma~\ref{lem:gradexp}) and joint
continuity of $J$.  For a unit vector $v$,
$v^{\top}M_\infty(\alpha)v=\int_{\Og}\norm{J(\alpha,\gamma)v}^2\,
d\pi_\gamma$; were this zero, $J(\alpha,\cdot)v$ would vanish
$\pi_\gamma$-a.e., hence, by continuity, at the evaluation points
$\gamma_{0,1},\dots,\gamma_{0,p_\alpha}\in\Og=\operatorname{supp}\pi_\gamma$,
contradicting $\rank D_\alpha G_1(\alpha,b_0)=\da$.  Thus $M_\infty(\alpha)\succ0$,
and $\lambda_{\min}(M_\infty(\cdot))$ is continuous and strictly positive on the
compact $\Oa$, so attains a positive minimum $c_0$.
\end{proof}

\noindent
Assumption~\ref{ass:multi-index} forces $Z=\varnothing$, and
almost-everywhere immersion in Assumption~\ref{ass:ae-immersion} holds
everywhere.

\section{Main Result}\label{sect:mainResult}

This section splits the leading eigenfunction injectivity into a pointwise
result and a uniform result.
The pointwise result uses the weaker Assumption~\ref{ass:ae-immersion} and
holds $\pi_\alpha$-almost everywhere.
The uniform result uses the stronger Assumption~\ref{ass:multi-index} to show
a uniform set of eigenfunctions is globally injective.
A corollary then quantifies the trade-off between the two.

We state the headline result first. It collects the full
set of assumptions under which the truncated pair $(U_R,V_R)$ is injective,
and hence supplies valid proxies in the sense of
Assumption~\ref{ass:proxyID}\textup{(ii)}.  Its
proof is given at the end of Section~\ref{sec:gamma}, once preliminary results are developed.

\begin{theorem}[Injective eigenfunction proxies]
\label{thm:headline}
Let Assumptions~\ref{ass:domain} and~\ref{ass:obseq} hold together with
their $\gamma$-analogues (Section~\ref{sec:gamma}).
\begin{enumerate}
\item[\textup{(i)}] \textup{(Almost-everywhere regime.)}  If in addition
Assumption~\ref{ass:ae-immersion} and its $\gamma$-analogue hold, then for
every $\delta>0$ there exist a finite $R(\delta)$ and compact sets
$\Omega_\delta^\alpha\subseteq\Oa\setminus Z$ and
$\Omega_\delta^\gamma\subseteq\Og\setminus Z_\gamma$ with
$\pi_\alpha(\Omega_\delta^\alpha)\ge1-\delta$ and
$\pi_\gamma(\Omega_\delta^\gamma)\ge1-\delta$, such that for every
$R\ge R(\delta)$ the truncations $U_R$ and $V_R$ are injective on
$\Omega_\delta^\alpha$ and $\Omega_\delta^\gamma$, respectively.
\item[\textup{(ii)}] \textup{(Uniform regime.)}  If in addition
Assumption~\ref{ass:multi-index} and its $\gamma$-analogue hold, then there
is a finite $R^{*}$ such that for every $R\ge R^{*}$, $U_R$ is injective on
$\Oa$ and $V_R$ is injective on $\Og$.
\end{enumerate}
In either regime $(\lambda_i,f_t)=(U_R(\alpha_i),V_R(\gamma_t))$ are valid
proxies in the sense of Assumption~\ref{ass:proxyID}\textup{(ii)}: on all of
$\Oa\times\Og$ at any fixed $R\ge R^{*}$ under \textup{(ii)}, and on
$\Omega_\delta^\alpha\times\Omega_\delta^\gamma$ with $\delta$ arbitrarily
small, at the price of $R(\delta)\to\infty$ as $\delta\downarrow0$, under
\textup{(i)}.
\end{theorem}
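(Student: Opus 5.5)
The plan is to prove the $\alpha$-side statements only; the $\gamma$-side follows by applying the same argument to the adjoint decomposition of Section~\ref{sec:gamma}, with the roles of $u_r$ and $v_r$ exchanged. The validity claim then follows from Assumption~\ref{ass:proxyID}(ii), since the eigenfunctions are identified. The central device is to compare $U_R$ with the response map. By orthonormality of $\{v_r\}$,
\[
\norm{\Phi_R(\alpha_1)-\Phi_R(\alpha_2)}_{\Hg}^2=\sum_{r\le R}\sigma_r^2\norm{u_r(\alpha_1)-u_r(\alpha_2)}^2 .
\]
Hence $U_R(\alpha_1)=U_R(\alpha_2)$ forces $\Phi_R(\alpha_1)=\Phi_R(\alpha_2)$. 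The left side increases monotonically in $R$ to $\Delta(\alpha_1,\alpha_2)$, which holds because $g(\alpha,\cdot)\in\HgV$ for every $\alpha$ (proof of Lemma~\ref{lem:gradexp}). Write $\Delta_R$ for the truncated quantity. Each $\Delta_R$ is continuous on $\Oa\times\Oa$, since the $u_r$ are $C^1$. By Assumption~\ref{ass:obseq}, $\Delta>0$ off the diagonal. It therefore suffices to show that $\Delta_R>0$ off the diagonal for all large $R$.

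For the uniform regime (ii), split the pairs into a far region and a near region. In the far region $K_\eta=\{\norm{\alpha_1-\alpha_2}\ge\eta\}$, which is compact, the continuous functions $\Delta_R$ increase to the continuous function $\Delta>0$. Dini's theorem gives uniform convergence on $K_\eta$, so $\Delta_R\ge\tfrac12\min_{K_\eta}\Delta>0$ for $R\ge R_1(\eta)$. The near region $\norm{\alpha_1-\alpha_2}<\eta$ is the main obstacle. Here I would use the first-order expansion along the segment, which is available by convexity of $\Oa$:
\[
U_R(\alpha_1)-U_R(\alpha_2)=\Bigl(\int_0^1 DU_R(\alpha_2+t h)\,dt\Bigr)h,\qquad h=\alpha_1-\alpha_2 .
\]
The weighted version gives $\Delta_R^{1/2}\ge\norm{\Sigma_R DU_R(\alpha_2)h}-\omega_R(\eta)\norm{h}$, where $\Sigma_R$ weights block $r$ by $\sigma_r$ and $\omega_R$ is the modulus of continuity of $\Sigma_R DU_R$ on the compact $\Oa$. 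The first term equals $(h^\top M_R(\alpha_2)h)^{1/2}$. The matrices $M_R$ are continuous, increase monotonically, and converge to the continuous $M_\infty$ (Lemma~\ref{lem:Minf-uniform}). Dini's theorem applied to $v^\top M_R(\alpha)v$ on the compact set $\Oa\times S^{\da-1}$ then yields $M_{R_0}\succeq (c_0/2)I$ uniformly for some $R_0$. Fix this $R_0$ and choose $\eta$ with $\omega_{R_0}(\eta)<\sqrt{c_0/2}/2$. Then $\Delta_{R_0}>0$ on the punctured near region. Since $\Delta_R\ge\Delta_{R_0}$ for $R\ge R_0$, the near bound persists for all larger $R$. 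The far bound, with $\eta$ now fixed, holds for $R\ge R_1(\eta)$. Setting $R^*=\max(R_0,R_1(\eta))$ completes (ii). The key point is that the order of choices ($R_0$ first, then $\eta$, then $R_1$) avoids any need for uniform control of $DU_R$ in $R$.

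For the almost-everywhere regime (i), the set $Z$ is closed and $\pi_\alpha$-null. Take $\Omega^\alpha_\delta=\{\alpha:\operatorname{dist}(\alpha,Z)\ge\rho\}$, which is compact and disjoint from $Z$. Since $\bigcap_\rho\{\operatorname{dist}<\rho\}=Z$ is null, continuity of the measure lets us choose $\rho$ with $\pi_\alpha(\Omega^\alpha_\delta)\ge1-\delta$. On this compact set $M_\infty\succeq c_\delta I$ with $c_\delta>0$, by continuity. The same argument as in (ii) then applies verbatim with $\Oa$ replaced by $\Omega^\alpha_\delta$. One subtlety: the segment between two nearby points of $\Omega^\alpha_\delta$ may leave $\Omega^\alpha_\delta$. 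I would therefore run the near-diagonal estimate on the slightly larger compact set $\{\operatorname{dist}\ge\rho/2\}$, requiring $\eta<\rho/2$ so that segments between points of $\Omega^\alpha_\delta$ stay inside it. This yields $R(\delta)$. Finally, intersect the $\alpha$ and $\gamma$ statements and take the larger of the two thresholds.
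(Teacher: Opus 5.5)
Your proposal is correct and follows the paper's own route: you use the Parseval separation function $\Delta_R\nearrow\Delta$, Dini on $M_R\to M_\infty$ to fix the order first, the segment-based local quadratic bound near the diagonal, and Dini on the compact far-from-diagonal set. For (i) you use the compact exhaustion $\{\operatorname{dist}(\alpha,Z)\ge\rho\}$ exactly as in Corollary~\ref{cor:blend}, run the $\gamma$-side through the same singular system, and take the maximum of the thresholds. Your enlargement to $\{\operatorname{dist}\ge\rho/2\}$ is harmless but unnecessary: the paper uses the eigenvalue floor only at the base point $\alpha_2$ and takes the modulus of continuity on all of the convex $\Oa$.
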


\subsection{Choosing Between the Regimes}
\label{sec:regimes}

The two regimes of Theorem~\ref{thm:headline} trade the strength of the
maintained assumptions against the behaviour of the truncation order.  
For identification of $g$ at the realised types (Theorem~\ref{th:proxyID}),
either serves. 
Under \textup{(i)} the identity $g_0(\lambda_i,f_t)=g(\alpha_i,\gamma_t)$ holds
for types in $\Omega_\delta^\alpha\times\Omega_\delta^\gamma$ for every
$\delta$, and so for almost every type once $R$ is allowed to grow along
$\delta\downarrow0$.  For estimation the comparison is governed by the mass and diameter of the undetermined set.  Fix $R$ and collect the conflated types
\[
   B_R:=\bigl\{\alpha\in\Oa:\ U_R(\alpha)=U_R(\alpha')\text{ for some }
   \alpha'\ne\alpha\bigr\},
   \qquad
   \delta_\alpha(R):=\pi_\alpha(B_R),
\]
\[
   \epsilon_\alpha(R):=\sup\bigl\{\norm{\alpha_1-\alpha_2}:\ \alpha_1\neq\alpha_2,\
   U_R(\alpha_1)=U_R(\alpha_2)\bigr\},
\]
with $B_R^\gamma$, $\delta_\gamma(R)$, $\epsilon_\gamma(R)$ defined
symmetrically.
Each $B_R$ is Borel (a countable union of projections of
compact sets) and the sets are nested downward in $R$.  
Under Assumptions~\ref{ass:domain}
and~\ref{ass:obseq} alone, $\epsilon_\alpha(R)\downarrow0$, i.e. the
away-from-the-diagonal step in the proof of Theorem~\ref{thm:main} uses
neither immersion condition, so for every $\epsilon>0$ every conflation of
separation at least $\epsilon$ is removed at some finite order.  Hence,  
for large $R$ any conflated pair is
$\epsilon_\alpha(R)$-close.  Further, under the assumptions of regime
\textup{(i)}, $\delta_\alpha(R)\downarrow0$, hence if $\alpha\notin Z$, its
partners $\alpha'_R$ must accumulate at $\alpha$ itself (any other limit
point would conflate with $\alpha$ under $U_\infty$, contradicting
Proposition~\ref{prop:part1} and Assumption~\ref{ass:obseq}), so they
eventually enter a compact neighbourhood of $\alpha$ on which
Proposition~\ref{prop:ae} makes some finite truncation injective. Hence
$\bigcap_R B_R\subseteq Z$ and continuity from above gives
$\pi_\alpha(B_R)\downarrow\pi_\alpha(\bigcap_R B_R)=0$.

A unit $i$
with $\alpha_i\in B_R$ potentially has $g_0(\lambda_i,f_t)\neq g(\alpha_i,\gamma_t)$
 at \emph{every} $t$, since $g_0$ averages $g$ over the conflated
fibre at each $\gamma_t$, and symmetrically for a time $t$ with
$\gamma_t\in B_R^\gamma$. Hence, misclassification contaminates whole rows and
columns of an $N\times T$ array, so the fraction of affected cells is of
order $\delta(R):=\delta_\alpha(R)+\delta_\gamma(R)$.  The size of the error on an affected
cell is bounded by response variation across the conflated set. With
$L$ the joint Lipschitz constant of $g$,
\[
   \bigl|g_0(\lambda_i,f_t)-g(\alpha_i,\gamma_t)\bigr|
   \;\le\; L\,\epsilon(R),
   \qquad
   \epsilon(R):=\epsilon_\alpha(R)+\epsilon_\gamma(R),
\]
so the error shrinks with the spatial scale of the misclassifications 
(Remark~\ref{rem:blend-sharp}'s grid of size $2^{-R}$).

In procedures that orthogonalise with respect to the regression function,
first-order errors cancel and the conflation enters root-$NT$ normalised
sums only through second-order products of nuisance errors, both factors
carrying the bias on the same affected cells; the contribution is of order
$\sqrt{NT}\,\delta(R)\,\bigl(L\epsilon(R)\bigr)^2$, while integrated
squared-error (consistency) conditions pick up
$\delta(R)(L\epsilon(R))^2=o(1)$ automatically.  For parametric rate convergence, regime
\textup{(i)} therefore needs
\[
   \sqrt{NT}\;\delta(R)\,\epsilon(R)^2\;\longrightarrow\;0
\]
along the chosen truncation sequence, which shrinks with mass and squared diameter jointly.  In the
construction of Remark~\ref{rem:blend-sharp} with a bounded type density,
$\delta(R)$ and $\epsilon(R)$ are both of order $2^{-R}$, so the display
holds once $R$ grows like $\log NT$.

No universal rate exists, however. 
The Dini argument is rate-free, and the same construction can be slowed so
that $\delta(R)$ and $\epsilon(R)$ vanish arbitrarily slowly.  Under regime
\textup{(i)} the display therefore remains a genuine restriction tying the
truncation sequence to the unknown collision geometry near $Z$; and near
$Z$ the inherited derivatives of $g_0$ degenerate, so smoothness conditions
on $g_0$ are likewise $\Omega_\delta$-relative.  Regime \textup{(ii)}
removes all of this at once, since $\delta(R)=\epsilon(R)=0$ for
$R\ge R^{*}$.

\subsection{Infinite Eigenfunction Collection}

The first result shows that injectivity of the full
eigenfunction collection requires only
Assumption~\ref{ass:obseq}.

\begin{proposition}
\label{prop:part1}
Let $g$ be square-integrable on $\Oa\times\Og$ with the SVD
\eqref{eq:svd}, where $r$ ranges over
$\sigma_r>0$ throughout; let $\alpha\mapsto g(\alpha,\cdot)$ be continuous
into $\Hg$ and take the continuous versions $u_r=\sigma_r^{-1}T_gv_r$ (both
supplied by Assumption~\ref{ass:domain}).  Then for all
$\alpha_1,\alpha_2\in\Oa$,
\begin{equation}
   \norm{\Phi(\alpha_1)-\Phi(\alpha_2)}_{\Hg}^2
   =\sum_{r\ge1}\sigma_r^2\,\norm{u_r(\alpha_1)-u_r(\alpha_2)}^2.
   \label{eq:parseval}
\end{equation}
Consequently the following are equivalent:
\begin{enumerate}
   \item[\textup{(i)}] $\Phi$ is injective (Assumption~\ref{ass:obseq});
   \item[\textup{(ii)}] the full eigenfunction map $U_\infty$ is injective
         \textup{(}equal to $U_M$ when $\rank T_g=M<\infty$\textup{)}.
\end{enumerate}
\end{proposition}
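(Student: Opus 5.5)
The plan is to prove the Parseval identity \eqref{eq:parseval} first and then read off the equivalence from it. Fix $\alpha_1,\alpha_2\in\Oa$ and set $\Delta:=\Phi(\alpha_1)-\Phi(\alpha_2)\in\Hg$. The key observation, already used in the proof of Lemma~\ref{lem:gradexp}, is that $\Phi(\alpha)=g(\alpha,\cdot)$ lies in the closed subspace $\HgV$ (each coordinate lies in $\mathcal{V}=\overline{\spn}\{v_r:\sigma_r>0\}$) for \emph{every} $\alpha$. This holds $\pi_\alpha$-a.e. because the SVD \eqref{eq:svd} converges in $L^{2}_{\pi}(\Oa\times\Og;\R^{d_G})$; equivalently, $g(\alpha,\cdot)$ is orthogonal to $\ker T_g=\mathcal{V}^\perp$ coordinatewise for a.e.\ $\alpha$. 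Continuity of $\alpha\mapsto g(\alpha,\cdot)$ into $\Hg$ and closedness of $\HgV$ then upgrade this to every $\alpha$. Hence $\Delta\in\HgV$. Since $\{v_r\}$ is an orthonormal basis of $\mathcal{V}$, applying the scalar Parseval identity coordinatewise gives
\[
   \norm{\Delta}_{\Hg}^2=\sum_{j=1}^{d_G}\sum_{r\ge1}\inner{\Delta_j}{v_r}_{\Ltg}^2 .
\]

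Next I would identify the Fourier coefficients. By definition of the continuous versions, $u_r(\alpha)=\sigma_r^{-1}(T_gv_r)(\alpha)=\sigma_r^{-1}\int_{\Og}g(\alpha,\gamma)v_r(\gamma)\,d\pi_\gamma$ holds pointwise for every $\alpha$, not merely a.e. The integral is well defined for each $\alpha$ because $g(\alpha,\cdot)\in\Hg$, and it is continuous in $\alpha$ by Cauchy--Schwarz together with continuity of $\Phi$. Therefore $\inner{\Delta_j}{v_r}_{\Ltg}=\sigma_r\bigl(u_{r,j}(\alpha_1)-u_{r,j}(\alpha_2)\bigr)$. Summing over $j$ and interchanging the two nonnegative sums yields \eqref{eq:parseval}.

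For the equivalence, note that every $\sigma_r>0$ in the index range. The identity therefore shows that $\Phi(\alpha_1)=\Phi(\alpha_2)$ in $\Hg$ if and only if $u_r(\alpha_1)=u_r(\alpha_2)$ for all $r$, that is, $U_\infty(\alpha_1)=U_\infty(\alpha_2)$. When $\rank T_g=M<\infty$ the sum is finite and $U_\infty=U_M$.

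One point needs care: Assumption~\ref{ass:obseq} is phrased as equality for all $\gamma$, whereas $\Phi$ takes values in $\Hg$. I would note that, under Assumption~\ref{ass:domain}, $g(\alpha,\cdot)$ is continuous in $\gamma$ and $\Og=\operatorname{supp}\pi_\gamma$. Consequently $\pi_\gamma$-a.e.\ equality coincides with equality everywhere on $\Og$, so the two notions of injectivity agree. Beyond this, the main obstacle is the ``every $\alpha$'' upgrade of membership in $\HgV$. Without it, \eqref{eq:parseval} would hold only as an inequality (Bessel), and injectivity of $U_\infty$ would not imply injectivity of $\Phi$. I would handle it exactly through the closedness-plus-continuity argument described in the first step.
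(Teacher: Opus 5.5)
Your proposal is correct and follows the paper's proof essentially step by step: you show $\Phi(\alpha)\in\HgV$ for every $\alpha$ (a.e.\ from the $L^2$ convergence of the SVD, then continuity of $\Phi$ and closedness of $\HgV$), identify the coefficients as $\sigma_r u_r(\alpha)$ through the continuous versions, apply Parseval, and use $\sigma_r>0$ for the equivalence. Your additional reconciliation of pointwise and $\pi_\gamma$-a.e.\ injectivity, which the paper skips, rests on a premise Assumption~\ref{ass:domain} does not literally supply, namely continuity of $g(\alpha,\cdot)$ in $\gamma$; what the argument actually needs is continuity of the difference $g(\alpha_1,\cdot)-g(\alpha_2,\cdot)=\int_0^1 J(\alpha_2+t(\alpha_1-\alpha_2),\cdot)(\alpha_1-\alpha_2)\,dt$, which follows from joint continuity of $J$ on the compact product and convexity of $\Oa$, so your conclusion stands once the justification is changed to this.
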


\begin{proof}
Fix $\alpha\in\Oa$.  The element
$\Phi(\alpha)=g(\alpha,\cdot)\in\Hg$ has the expansion
$\Phi(\alpha)=\sum_{r\ge1}\bigl(\sigma_r u_r(\alpha)\bigr)v_r$ with vector
coefficients: as in the proof of
Lemma~\ref{lem:gradexp}, $\Phi(\alpha)\in\HgV$ for
\emph{every} $\alpha$ (the $L^2$-convergence of \eqref{eq:svd} gives this
$\pi_\alpha$-a.e., and $\Phi$ is continuous with $\HgV$ closed), so
$\Phi(\alpha)=\sum_r c_r(\alpha)v_r$ with
$c_r(\alpha)=\int_{\Og}\Phi(\alpha)v_r\,d\pi_\gamma=(T_gv_r)(\alpha)
=\sigma_r u_r(\alpha)\in\R^{d_G}$ by
the continuous versions.  Since $\{v_re_i\}_{r\ge1,\,i\le d_G}$ is orthonormal in
$\Hg$, $e_1,\dots,e_{d_G}$ being the coordinate vectors of $\R^{d_G}$,
Parseval's identity applied to $\Phi(\alpha_1)-\Phi(\alpha_2)
=\sum_r\sigma_r\bigl(u_r(\alpha_1)-u_r(\alpha_2)\bigr)v_r$ yields
\eqref{eq:parseval}.

For the equivalence: the right-hand side of \eqref{eq:parseval} is zero if and
only if $\sigma_r\norm{u_r(\alpha_1)-u_r(\alpha_2)}=0$ for every $r$,
which, because $\sigma_r>0$, holds if and only if
$u_r(\alpha_1)=u_r(\alpha_2)$ for every $r$, i.e.\ $U_\infty(\alpha_1)=
U_\infty(\alpha_2)$.  The left-hand side is zero if and only if
$\Phi(\alpha_1)=\Phi(\alpha_2)$.  Therefore $\Phi(\alpha_1)=\Phi(\alpha_2)$ iff
$U_\infty(\alpha_1)=U_\infty(\alpha_2)$, and (i)$\Leftrightarrow$(ii) follows.
\end{proof}

\begin{remark}
\label{rem:positive-sigma}
The collection $\{u_r\}$ ranges over the positive singular values, equivalently
over an orthonormal basis of $\overline{\operatorname{range}(T_g)}$, which is what
``$\sigma_r>0$ throughout'' records.  This is a convention, not a restriction: a
finite-rank kernel ($\sigma_r=0$ for $r>M$) is covered with
$U_\infty=U_M=(u_1,\dots,u_M)$, the proof unchanged since the dropped terms vanish
in \eqref{eq:parseval}.  Positivity cannot simply be waived, however: were
zero-$\sigma$ directions (an arbitrary orthonormal completion of $\Ha$) appended
to the collection, the step ``$\sigma_r(u_r(\alpha_1)-u_r(\alpha_2))=0$ for all $r$
$\Rightarrow u_r(\alpha_1)=u_r(\alpha_2)$ for all $r$'' would fail and
\textup{(ii)}$\Rightarrow$\textup{(i)} would break, as such directions carry no
information about $g$.  Finally, when $M<\infty$ one has $\Delta_R=\Delta$ for all
$R\ge M$, so Theorem~\ref{thm:main} holds with $R_0=M$ directly from
Assumption~\ref{ass:obseq}, with no recourse to Assumption~\ref{ass:multi-index}.
\end{remark}

\begin{corollary}
\label{cor:part1}
Under Assumption~\ref{ass:obseq}, the map $U_\infty$ is injective on $\Oa$.
\end{corollary}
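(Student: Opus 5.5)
This is immediate from Proposition~\ref{prop:part1}; the only work is checking that its hypotheses are met. I take Assumption~\ref{ass:domain} as the standing setting, which is how the paper uses the Corollary and what Proposition~\ref{prop:part1} explicitly invokes. Under it, $g$ is square-integrable on $\Oa\times\Og$, because $g$ is continuous on the compact product and the measures are probability measures. By Lemma~\ref{lem:Phi-cont}, $\alpha\mapsto g(\alpha,\cdot)$ is continuous (indeed Lipschitz) into $\Hg$. For each $r$ with $\sigma_r>0$ I fix the continuous version $u_r=\sigma_r^{-1}T_gv_r$; it is continuous because $T_gv_r(\alpha)=\inner{\Phi(\alpha)}{v_r}$ componentwise and $\Phi$ is continuous. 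With these versions, $U_\infty$ is defined pointwise on all of $\Oa$, so injectivity is a meaningful statement.

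Now take $\alpha_1,\alpha_2\in\Oa$ with $U_\infty(\alpha_1)=U_\infty(\alpha_2)$, so that $u_r(\alpha_1)=u_r(\alpha_2)$ for every $r$. The Parseval identity \eqref{eq:parseval} then gives $\norm{\Phi(\alpha_1)-\Phi(\alpha_2)}_{\Hg}^2=0$, that is, $g(\alpha_1,\cdot)=g(\alpha_2,\cdot)$ $\pi_\gamma$-almost everywhere.

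Assumption~\ref{ass:obseq} requires equality for \emph{all} $\gamma\in\Og$, so I must upgrade almost-everywhere equality to pointwise equality. The function $g(\alpha_1,\cdot)-g(\alpha_2,\cdot)$ is continuous on $\Og$, by the integral representation in Lemma~\ref{lem:Phi-cont} together with joint continuity of $D_\alpha g$, or simply by continuity of $g$. It vanishes $\pi_\gamma$-a.e., and $\Og=\operatorname{supp}\pi_\gamma$ (Remark~\ref{rem:support}), so it vanishes everywhere on $\Og$. The reason is that if it were nonzero at some $\gamma$, it would be nonzero on an open neighbourhood of $\gamma$, and that neighbourhood has positive $\pi_\gamma$-mass. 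Assumption~\ref{ass:obseq} then yields $\alpha_1=\alpha_2$.

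This a.e.-to-everywhere step is the only point needing care. In the paper's convention $\Phi$ is read as an element of $\Hg$, in which case (i)$\Leftrightarrow$(ii) of Proposition~\ref{prop:part1} gives the claim verbatim. The continuity-on-the-support argument closes the remaining gap when Assumption~\ref{ass:obseq} is read pointwise in $\gamma$.
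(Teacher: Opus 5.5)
Your proof is correct and follows the paper's route: the corollary is read off from the Parseval identity \eqref{eq:parseval} of Proposition~\ref{prop:part1}. Your extra step, upgrading $\pi_\gamma$-a.e.\ equality of $g(\alpha_1,\cdot)$ and $g(\alpha_2,\cdot)$ to equality on all of $\Og=\operatorname{supp}\pi_\gamma$, is legitimate and fills in what the paper leaves implicit when it identifies the pointwise Assumption~\ref{ass:obseq} with injectivity of $\Phi$ into $\Hg$. One small correction: Assumption~\ref{ass:domain} does not make $g$ itself jointly continuous, so square-integrability should come from the standing hypothesis $g\in L^{2}_{\pi}(\Oa\times\Og;\R^{d_G})$, and continuity of the difference should come from the integral representation with jointly continuous $D_\alpha g$ (your first justification), not from ``continuity of $g$''.
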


\begin{proof}
Immediate from Proposition~\ref{prop:part1}, (i)$\Rightarrow$(ii).
\end{proof}

\subsection{Finite Truncations}

We now show that a finite leading truncation $U_R$ inherits injectivity once
$R$ is large enough.  This is the substantive step: injectivity of the infinite
collection (Corollary~\ref{cor:part1}) does not by itself guarantee that any
finite truncation separates all pairs of points, and an additional local
condition is required.

It is convenient to introduce, for $R\in\N\cup\{\infty\}$, the
\emph{separation function}
\begin{equation}
   \Delta_R(\alpha_1,\alpha_2)
      :=\sum_{r\le R}\sigma_r^2\,\norm{u_r(\alpha_1)-u_r(\alpha_2)}^2
      =\norm{\Phi_R(\alpha_1)-\Phi_R(\alpha_2)}_{\Hg}^2,
   \label{eq:sep}
\end{equation}
where the second equality follows as in Proposition~\ref{prop:part1} applied to
$g_R$.  Three properties are immediate:
\begin{itemize}
   \item[(a)] \emph{Continuity}: each $\Delta_R$ ($R\in\N$) is continuous on
         $\Oa\times\Oa$, being a finite sum of continuous functions; and
         $\Delta_\infty=\Delta$ is continuous, equal to
         $\norm{\Phi(\alpha_1)-\Phi(\alpha_2)}^2_{\Hg}$, by
         Lemma~\ref{lem:Phi-cont}.
   \item[(b)] \emph{Monotonicity}: each summand $\ge0$, so
         $\Delta_{R}\le\Delta_{R+1}$ and $\Delta_R\nearrow\Delta$ pointwise as
         $R\to\infty$.
   \item[(c)] \emph{Injectivity criterion}: since $\sigma_r>0$,
         $U_R$ is injective if and only if $\Delta_R(\alpha_1,\alpha_2)>0$ for
         all $\alpha_1\ne\alpha_2$.
\end{itemize}
Under Assumption~\ref{ass:obseq} and Proposition~\ref{prop:part1}, we have
$\Delta(\alpha_1,\alpha_2)>0$ for all $\alpha_1\ne\alpha_2$.

\begin{lemma}
\label{lem:near}
Under Assumptions~\ref{ass:domain} and~\ref{ass:multi-index}, with $c_0$ as in
\eqref{eq:c0}, there exist a finite $R_1$ and a radius $\epsilon>0$ such that
\[
   \Delta_{R_1}(\alpha_1,\alpha_2)\ \ge\ \frac{c_0}{8}\,\norm{\alpha_1-\alpha_2}^2
   \qquad\text{whenever }\ \norm{\alpha_1-\alpha_2}<\epsilon .
\]
\end{lemma}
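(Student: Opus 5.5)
The plan is to use a first-order Taylor expansion of each truncated eigenfunction map along the segment joining $\alpha_1$ and $\alpha_2$; convexity of $\Oa$ (Assumption~\ref{ass:domain}) keeps the segment inside the domain. This reduces $\Delta_{R}$ to a quadratic form in $M_R$, which is then bounded below uniformly in $\alpha$.

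\emph{Step 1 (uniform approximation of $M_\infty$ by $M_R$).} The partial sums $M_R(\alpha)$ are continuous in $\alpha$, since each $u_r$ is $C^1$. They are nondecreasing in $R$ in the psd order and converge pointwise to $M_\infty(\alpha)$, which is continuous by Lemma~\ref{lem:Minf-uniform}. Hence the scalar functions $\operatorname{tr}\bigl(M_\infty(\alpha)-M_R(\alpha)\bigr)$ are continuous, nonnegative and decrease pointwise to $0$ on the compact $\Oa$. By Dini's theorem the convergence is uniform. Since $0\preceq M_\infty-M_R$ implies $\norm{M_\infty-M_R}\le\operatorname{tr}(M_\infty-M_R)$, we can choose $R_1$ with $\norm{M_\infty(\alpha)-M_{R_1}(\alpha)}\le c_0/2$ for all $\alpha$. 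Then $M_{R_1}(\alpha)\succeq (c_0/2)I_{\da}$ uniformly on $\Oa$.

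\emph{Step 2 (local linearisation).} Fix this $R_1$ and write $h=\alpha_2-\alpha_1$. For each $r\le R_1$,
\[
u_r(\alpha_2)-u_r(\alpha_1)=\int_0^1 Du_r(\alpha_1+th)h\,dt=Du_r(\alpha_1)h+\rho_r(\alpha_1,h),
\]
with $\norm{\rho_r}\le\omega_r(\norm{h})\norm{h}$. Here $\omega_r$ is the modulus of continuity of $Du_r$ on the compact $\Oa$, and $\omega_r(s)\to0$ as $s\to0$ by uniform continuity. Using $\norm{a+b}^2\ge\tfrac12\norm{a}^2-\norm{b}^2$ termwise gives
\[
\Delta_{R_1}(\alpha_1,\alpha_2)\ge\tfrac12 h^{\top}M_{R_1}(\alpha_1)h-\Bigl(\sum_{r\le R_1}\sigma_r^2\omega_r(\norm{h})^2\Bigr)\norm{h}^2\ge\Bigl(\tfrac{c_0}{4}-\eta(\norm{h})\Bigr)\norm{h}^2 .
\]
The function $\eta(s):=\sum_{r\le R_1}\sigma_r^2\omega_r(s)^2$ is a finite sum, so $\eta(s)\to0$ as $s\to0$. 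Choose $\epsilon>0$ with $\eta(s)\le c_0/8$ for $s<\epsilon$; this yields the claimed bound $\Delta_{R_1}\ge (c_0/8)\norm{\alpha_1-\alpha_2}^2$.

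\emph{Main obstacle.} The delicate point is Step 1: obtaining $R_1$ \emph{uniformly} in $\alpha$. Pointwise positivity of $M_R$ for large $R$ is easy, but the uniform bound needs the continuity of the limit $M_\infty$ established in Lemma~\ref{lem:Minf-uniform}, together with monotonicity, in order for Dini's theorem to apply. The trace reduction turns the matrix-valued monotone convergence into a scalar one, so Dini's theorem can be invoked directly. Step 2 is then routine, because the truncation is finite and each $u_r$ is uniformly $C^1$ on the compact convex domain. This uniformity is also why the choice of $R_1$ must be fixed before choosing $\epsilon$.
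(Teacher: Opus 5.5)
Your proof is correct and follows essentially the same route as the paper. Dini's theorem gives the uniform floor $M_{R_1}\succeq\tfrac{c_0}{2}I_{\da}$, and a first-order expansion along the segment (kept in $\Oa$ by convexity), controlled by uniform continuity of the finitely many $Du_r$, gives the local quadratic bound. The differences are cosmetic: you apply Dini to the scalar $\operatorname{tr}(M_\infty-M_R)$ with $\norm{A}\le\operatorname{tr}A$ for $A\succeq0$, rather than to each $v^{\top}M_Rv$ plus a finite cover of the unit sphere; and you use the termwise inequality $\norm{a+b}^2\ge\tfrac12\norm{a}^2-\norm{b}^2$ rather than the reverse triangle inequality on the stacked vector, both versions giving the same constant $c_0/8$.
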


\begin{proof}
\emph{Choice of $R_1$.}  Each $M_R$ is continuous (finite sum of continuous
matrix-valued functions) and $M_R\nearrow M_\infty$ pointwise in the positive
semidefinite order (Lemma~\ref{lem:gradexp}), with $M_\infty$ continuous by
Lemma~\ref{lem:Minf-uniform}.  The scalar functions
$\alpha\mapsto v^{\top}M_R(\alpha)v$ increase monotonically to
$\alpha\mapsto v^{\top}M_\infty(\alpha)v$ for each fixed unit vector $v$; by
Dini's theorem in \cite{rudin1976} this convergence is uniform on the compact $\Oa$, and a finite
cover of the unit sphere in $\R^{\da}$ makes the convergence
$M_R\to M_\infty$ uniform in operator norm on $\Oa$.  Hence we may fix $R_1$
with
\begin{equation}
   M_{R_1}(\alpha)\succeq\tfrac{c_0}{2}\,I_{\da}
   \qquad\text{for all }\alpha\in\Oa.
   \label{eq:MR1}
\end{equation}

\emph{Local quadratic lower bound.}  Fix $\alpha_1,\alpha_2\in\Oa$ and write
$\delta=\alpha_1-\alpha_2$.  By convexity of $\Oa$
(Assumption~\ref{ass:domain}), the segment $\zeta(t)=\alpha_2+t\delta$ lies in
$\Oa$ for $t\in[0,1]$.  Let
$\Sigma_{R_1}=\diag(\sigma_1 I_{d_G},\dots,\sigma_{R_1}I_{d_G})\in\R^{d_G R_1\times d_G R_1}$
and $P(t)=\Sigma_{R_1}\,DU_{R_1}(\zeta(t))\in\R^{d_G R_1\times\da}$, where
$DU_{R_1}=\bigl(Du_1^{\top},\dots,Du_{R_1}^{\top}\bigr)^{\top}
\in\R^{d_G R_1\times\da}$ is the Jacobian of $U_{R_1}$.
By the fundamental theorem of calculus,
\[
   \Sigma_{R_1}\bigl(U_{R_1}(\alpha_1)-U_{R_1}(\alpha_2)\bigr)
   =\int_0^1 P(t)\,\delta\,dt,
\]
so, by the triangle inequality,
\begin{equation}
   \Delta_{R_1}(\alpha_1,\alpha_2)^{1/2}
   =\Bigl\|\int_0^1 P(t)\delta\,dt\Bigr\|
   \ge\norm{P(0)\delta}-\int_0^1\norm{P(t)-P(0)}\,dt\;\norm{\delta}.
   \label{eq:segment}
\end{equation}
For the first term, $\norm{P(0)\delta}^2
=\delta^{\top}DU_{R_1}(\alpha_2)^{\top}\Sigma_{R_1}^2 DU_{R_1}(\alpha_2)\delta
=\delta^{\top}M_{R_1}(\alpha_2)\delta\ge\tfrac{c_0}{2}\norm{\delta}^2$ by
\eqref{eq:MR1}, hence $\norm{P(0)\delta}\ge\sqrt{c_0/2}\,\norm{\delta}$.
For the second term, $P$ is continuous on the compact $\Oa$ (as
$DU_{R_1}$ is continuous), hence uniformly continuous; choose $\epsilon>0$ so
that $\norm{\alpha-\alpha'}<\epsilon$ implies
$\norm{\Sigma_{R_1}DU_{R_1}(\alpha)-\Sigma_{R_1}DU_{R_1}(\alpha')}
\le\tfrac12\sqrt{c_0/2}$.  Then whenever $\norm{\delta}<\epsilon$ we have
$\norm{\zeta(t)-\alpha_2}=t\norm{\delta}<\epsilon$ and so
$\norm{P(t)-P(0)}\le\tfrac12\sqrt{c_0/2}$ for all $t$.  Substituting into
\eqref{eq:segment},
\[
   \Delta_{R_1}(\alpha_1,\alpha_2)^{1/2}
   \ge\sqrt{c_0/2}\,\norm{\delta}-\tfrac12\sqrt{c_0/2}\,\norm{\delta}
   =\tfrac12\sqrt{c_0/2}\,\norm{\delta},
\]
whence $\Delta_{R_1}(\alpha_1,\alpha_2)\ge\tfrac{c_0}{8}\norm{\delta}^2$ for
$\norm{\delta}<\epsilon$, as claimed.
\end{proof}

\subsection{Pointwise Almost-everywhere Injectivity}
\label{app:ae}

Assumption~\ref{ass:multi-index} requires full rank at \emph{every} $\alpha$.
The example $g(\alpha,\gamma)=\alpha^3 h(\gamma)$ of Remark~\ref{rem:independence}
shows this can fail on a $\pi_\alpha$-null set without harming identification:
there $U_\infty$ stays injective and the drop occurs only at $\alpha=0$, with
$\pi_\alpha(\{0\})=0$.  Assumption~\ref{ass:ae-immersion} is the weaker notion that
tolerates such a drop, which is the eigenfunction-map analogue of injectivity almost
everywhere (global invertibility) in the sense of \cite{traver2025global}.  Its
degeneracy set $Z$ is closed, since $M_\infty$ is continuous
(Lemma~\ref{lem:Minf-uniform}).

\begin{proposition}
\label{prop:ae}
Under Assumptions~\ref{ass:domain}, \ref{ass:obseq},
and~\ref{ass:ae-immersion}, for every compact $K\subseteq\Oa\setminus Z$ there is
a finite $R_K$ with $U_R$ injective on $K$ for all $R\ge R_K$.  Hence $\pi_\alpha$-almost every
$\alpha$ has a neighborhood on which a finite leading truncation is injective.
\end{proposition}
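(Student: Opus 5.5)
The plan is to repeat the two-step structure behind Lemma~\ref{lem:near} and the away-from-diagonal argument, with $\Oa$ replaced by the compact set $K\subseteq\Oa\setminus Z$. By criterion (c) it suffices to find a finite $R_K$ with $\Delta_{R_K}(\alpha_1,\alpha_2)>0$ for all distinct $\alpha_1,\alpha_2\in K$. Monotonicity (b) then gives the same for every $R\ge R_K$. The pairs in $K\times K$ split into a near-diagonal part and a far part, each handled separately.

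\emph{Near the diagonal.} Since $M_\infty$ is continuous (Lemma~\ref{lem:Minf-uniform}) and $M_\infty(\alpha)\succ0$ off $Z$, the quantity $c_K:=\min_{\alpha\in K}\lambda_{\min}(M_\infty(\alpha))$ is positive on the compact $K$. Repeating the Dini argument of Lemma~\ref{lem:near} on $K$ gives $R_1$ with $M_{R_1}\succeq\tfrac{c_K}{2}I_{\da}$ on $K$.

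The segment argument needs care, because the segment $\zeta(t)$ joining two points of $K$ lies in $\Oa$ by convexity but need not stay in $K$. I would handle this by enlarging: take a compact neighbourhood $K'=\{\alpha\in\Oa:\operatorname{dist}(\alpha,K)\le\rho\}$ with $\rho$ small enough that $K'\cap Z=\varnothing$. This is possible because $Z$ is closed and disjoint from the compact $K$. Run the Dini step on $K'$ instead. For $\norm{\alpha_1-\alpha_2}<\epsilon\le\rho$ with $\alpha_2\in K$, the whole segment lies in $K'$. Uniform continuity of $\Sigma_{R_1}DU_{R_1}$ on the compact $\Oa$ then yields, exactly as in \eqref{eq:segment}, $\Delta_{R_1}(\alpha_1,\alpha_2)\ge\tfrac{c_{K'}}{8}\norm{\alpha_1-\alpha_2}^2>0$ for $0<\norm{\alpha_1-\alpha_2}<\epsilon$.

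\emph{Away from the diagonal.} On the compact set $F=\{(\alpha_1,\alpha_2)\in K\times K:\norm{\alpha_1-\alpha_2}\ge\epsilon\}$, Assumption~\ref{ass:obseq} and Proposition~\ref{prop:part1} give $\Delta>0$. Continuity of $\Delta$ (property (a)) then gives $\min_F\Delta=:m>0$. Each $\Delta_R$ is continuous and $\Delta_R\nearrow\Delta$ pointwise with continuous limit, so Dini's theorem makes the convergence uniform on $F$. Hence some $R_2$ has $\Delta_{R_2}\ge m/2$ on $F$. Setting $R_K=\max(R_1,R_2)$ and using monotonicity finishes the first claim.

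\emph{Almost-everywhere statement.} For $\alpha\notin Z$, closedness of $Z$ gives a closed ball $\bar B$ around $\alpha$ in $\Oa$ that misses $Z$. This $\bar B$ is a compact neighbourhood, so applying the first claim with $K=\bar B$ makes $U_R$ injective on it. Since $\pi_\alpha(Z)=0$ by Assumption~\ref{ass:ae-immersion}, this covers $\pi_\alpha$-almost every $\alpha$.

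The main obstacle is the near-diagonal step, specifically keeping the segment inside a region where $M_\infty$ is uniformly positive. The compact enlargement $K'$ is what resolves it. Everything else transfers verbatim from Lemma~\ref{lem:near} and the Dini argument.
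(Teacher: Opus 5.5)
Your proposal is correct and follows the paper's proof essentially step for step: the floor $c_K>0$ on $K$, the Dini choice of $R_1$ with the segment bound of Lemma~\ref{lem:near}, Dini's theorem for $\Delta_R\nearrow\Delta$ on the off-diagonal compact part of $K\times K$, then monotonicity and criterion (c), with the almost-everywhere claim following from closedness of $Z$ and $\pi_\alpha(Z)=0$. The one difference, the enlargement to $K'$, is harmless but unnecessary: \eqref{eq:segment} uses the floor only at the base point $\alpha_2\in K$, and otherwise needs only uniform continuity of $\Sigma_{R_1}DU_{R_1}$ on all of $\Oa$, so the segment leaving $K$ is not actually an obstacle (the paper's proof makes exactly this observation).
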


\begin{proof}
On compact $K$, matrix $M_\infty$ is continuous and pd, so
$c_K:=\min_{\alpha\in K}\lambda_{\min}(M_\infty(\alpha))>0$.
Lemma~\ref{lem:near} applies with the global floor \eqref{eq:c0} replaced by
$M_\infty\succeq c_K I_{\da}$ on $K$: its proof uses the floor only at the
base point $\alpha_2$ and through the uniform convergence $M_R\to M_\infty$
on the compact $\Oa$, which is unaffected.  This yields finite $R_1$ and
$\epsilon>0$ with $\Delta_{R_1}(\alpha_1,\alpha_2)>0$ whenever
$0<\norm{\alpha_1-\alpha_2}<\epsilon$ and $\alpha_2\in K$.  Away from the
diagonal, on the compact set
$\{(\alpha_1,\alpha_2)\in K\times K:\norm{\alpha_1-\alpha_2}\ge\epsilon\}$
the continuous function $\Delta$ is strictly positive
(Proposition~\ref{prop:part1} and Assumption~\ref{ass:obseq}), hence attains
a positive minimum, and by Dini's theorem $\Delta_R\nearrow\Delta$ uniformly
there; choose $R_K\ge R_1$ with $\Delta_R>0$ on this set for all $R\ge R_K$.
By monotonicity (property (b)) and the injectivity criterion (property (c)),
$U_R$ is injective on $K$ for every $R\ge R_K$.  As $Z$ is closed,
$\Oa\setminus Z$ is open and, since $\pi_\alpha(Z)=0$, almost every $\alpha$
lies in such a $K$.
\end{proof}

The order $R_K$ grows as $K$ approaches $Z$. 
A measure-zero rank drop costs
the single uniform truncation order, restored under the stronger
Assumption~\ref{ass:multi-index} in Theorem~\ref{thm:main} below, but not
identification.  As in
\cite{traver2025global}, the global condition is supplied by Assumption~\ref{ass:obseq}, which is the
analogue of prescribed homeomorphic boundary data, and
Assumption~\ref{ass:ae-immersion} replaces nondegeneracy everywhere.

\subsection{Uniform Finite Leading Eigenfunctions}

Under the stronger Assumption~\ref{ass:multi-index} the degeneracy set is
empty (Lemma~\ref{lem:Minf-uniform}) and the compact exhaustion of
Proposition~\ref{prop:ae} is unnecessary: a single finite truncation order
delivers injectivity on all of $\Oa$.

\begin{theorem}
\label{thm:main}
Suppose Assumptions~\ref{ass:domain}, \ref{ass:obseq},
and~\ref{ass:multi-index} hold.  Then there exists a finite $R_0$ such that for
every $R\ge R_0$ the leading truncation $U_R$ of \eqref{eq:UR} is injective on $\Oa$.
\end{theorem}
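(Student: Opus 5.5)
The proof splits pairs $(\alpha_1,\alpha_2)$ into a near-diagonal region and an off-diagonal region, and finds a single finite order that makes $\Delta_R$ positive on both. By the injectivity criterion (property (c)), $U_R$ is injective on $\Oa$ if and only if $\Delta_R(\alpha_1,\alpha_2)>0$ for all $\alpha_1\neq\alpha_2$. By monotonicity (property (b)), it suffices to exhibit one finite $R_0$ with this positivity: every $R\ge R_0$ then inherits it, since $\Delta_R\ge\Delta_{R_0}$.

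\emph{Near the diagonal.} By Lemma~\ref{lem:Minf-uniform}, Assumption~\ref{ass:multi-index} gives the uniform floor $M_\infty\succeq c_0 I_{\da}$ with $c_0>0$. Lemma~\ref{lem:near} then provides a finite $R_1$ and a radius $\epsilon>0$ such that
\[
\Delta_{R_1}(\alpha_1,\alpha_2)\ge\tfrac{c_0}{8}\norm{\alpha_1-\alpha_2}^2>0\qquad\text{whenever }0<\norm{\alpha_1-\alpha_2}<\epsilon .
\]
Hence every $R\ge R_1$ separates all distinct pairs closer than $\epsilon$. This is where the local immersion condition enters. Convexity of $\Oa$ was already used inside Lemma~\ref{lem:near} for the segment argument.

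\emph{Away from the diagonal.} Let $K_\epsilon:=\{(\alpha_1,\alpha_2)\in\Oa\times\Oa:\norm{\alpha_1-\alpha_2}\ge\epsilon\}$; it is compact because $\Oa$ is compact. On $K_\epsilon$ we have $\alpha_1\ne\alpha_2$, so Proposition~\ref{prop:part1} together with Assumption~\ref{ass:obseq} gives $\Delta>0$. Since $\Delta$ is continuous (Lemma~\ref{lem:Phi-cont}), it attains a minimum $m>0$ on $K_\epsilon$. Each $\Delta_R$ is continuous (property (a)) and $\Delta_R\nearrow\Delta$ pointwise with a continuous limit, so Dini's theorem makes the convergence uniform on $K_\epsilon$. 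Choose $R_2$ with $\sup_{K_\epsilon}(\Delta-\Delta_{R_2})<m/2$; then $\Delta_{R_2}\ge m/2>0$ on $K_\epsilon$.

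\emph{Combining.} Set $R_0:=\max(R_1,R_2)$. Monotonicity gives $\Delta_R>0$ on both regions for every $R\ge R_0$, hence for every distinct pair, and property (c) yields injectivity of $U_R$ on $\Oa$. In the finite-rank case ($\rank T_g=M<\infty$) one can take $R_0=M$ directly, by Remark~\ref{rem:positive-sigma}.

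The main obstacle is the near-diagonal region. There $\Delta\to0$, so no positive minimum exists and the Dini/compactness argument gives no uniform order; the quadratic lower bound with a uniform radius in Lemma~\ref{lem:near} is exactly what closes this gap. Since that lemma is already established, what remains is to check that its radius $\epsilon$ is fixed before $K_\epsilon$ is formed, so that the order of choices is $R_1,\epsilon$ first and $R_2$ second.
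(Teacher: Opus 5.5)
Your proposal is correct and follows the paper's proof essentially step for step. Lemma~\ref{lem:near} handles pairs within distance $\epsilon$, and Dini's theorem on the compact set $K_\epsilon$ handles the remaining pairs, using strict positivity of $\Delta$ from Proposition~\ref{prop:part1} and Assumption~\ref{ass:obseq}; monotonicity of $\Delta_R$ then combines the two, the only cosmetic difference being that you take $R_0=\max(R_1,R_2)$ where the paper chooses $R_2\ge R_1$ directly.
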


\begin{proof}
Let $R_1$ and $\epsilon>0$ be as in Lemma~\ref{lem:near}, so that
\begin{equation}
   \Delta_{R_1}(\alpha_1,\alpha_2)>0
   \qquad\text{whenever }0<\norm{\alpha_1-\alpha_2}<\epsilon.
   \label{eq:nearpos}
\end{equation}

\emph{Away from the diagonal.}  The set
$K_\epsilon:=\{(\alpha_1,\alpha_2)\in\Oa\times\Oa:\norm{\alpha_1-\alpha_2}
\ge\epsilon\}$ is closed in the compact $\Oa\times\Oa$, hence compact.  The
function $\Delta$ is continuous on $K_\epsilon$ and, by
Proposition~\ref{prop:part1} together with Assumption~\ref{ass:obseq}, strictly
positive there; therefore it attains a minimum
\[
   m_\epsilon:=\min_{(\alpha_1,\alpha_2)\in K_\epsilon}\Delta(\alpha_1,\alpha_2)>0.
\]
The functions $\Delta_R$ increase monotonically to the continuous limit
$\Delta$ on the compact $K_\epsilon$, with each $\Delta_R$ continuous; by
Dini's theorem the convergence is uniform on $K_\epsilon$.  Choose
$R_2\ge R_1$ such that $\sup_{K_\epsilon}(\Delta-\Delta_R)\le m_\epsilon/2$ for
all $R\ge R_2$, so that
\begin{equation}
   \Delta_R(\alpha_1,\alpha_2)\ge\Delta(\alpha_1,\alpha_2)-\tfrac{m_\epsilon}{2}
   \ge\tfrac{m_\epsilon}{2}>0
   \qquad\text{on }K_\epsilon,\ \text{for all }R\ge R_2.
   \label{eq:farpos}
\end{equation}

\emph{Conclusion.}  Set $R_0:=R_2$ and take any $R\ge R_0$ and any
$\alpha_1\ne\alpha_2$.  If $\norm{\alpha_1-\alpha_2}<\epsilon$, then by
monotonicity (property (b)) and \eqref{eq:nearpos},
\[
   \Delta_R(\alpha_1,\alpha_2)\ge\Delta_{R_1}(\alpha_1,\alpha_2)>0.
\]
If $\norm{\alpha_1-\alpha_2}\ge\epsilon$, then $(\alpha_1,\alpha_2)\in
K_\epsilon$ and \eqref{eq:farpos} gives $\Delta_R(\alpha_1,\alpha_2)>0$.  In
both cases $\Delta_R(\alpha_1,\alpha_2)>0$.  By the injectivity criterion
(property (c)), $U_R$ is injective.
\end{proof}

\subsection{Quantifying the Trade-off}

The trade-off between Theorem~\ref{thm:main} and Proposition~\ref{prop:ae} can
be quantified: a single finite truncation order suffices on all but an
arbitrarily small mass of types.

\begin{corollary}
\label{cor:blend}
Under Assumptions~\ref{ass:domain}, \ref{ass:obseq},
and~\ref{ass:ae-immersion}, for every $\delta>0$ there exist a finite
$R(\delta)$ and a compact $\Omega_\delta\subseteq\Oa\setminus Z$ with
$\pi_\alpha(\Omega_\delta)\ge1-\delta$ such that $U_R$ is injective on
$\Omega_\delta$ for every $R\ge R(\delta)$.
\end{corollary}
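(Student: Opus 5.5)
The plan is to reduce the corollary to Proposition~\ref{prop:ae} by inner regularity of $\pi_\alpha$. Since $Z$ is closed (by continuity of $M_\infty$, Lemma~\ref{lem:Minf-uniform}), the set $\Oa\setminus Z$ is open in $\Oa$, hence a Borel set. Assumption~\ref{ass:ae-immersion} gives $\pi_\alpha(\Oa\setminus Z)=1$. We then construct compacts $K_n\subseteq\Oa\setminus Z$ with $\pi_\alpha(K_n)\uparrow 1$. Concretely, take
\[
   K_n:=\{\alpha\in\Oa:\operatorname{dist}(\alpha,Z)\ge 1/n\},
\]
with $K_n=\Oa$ if $Z=\varnothing$. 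Each $K_n$ is closed in the compact set $\Oa$, hence compact, and it lies in $\Oa\setminus Z$. The sets increase to $\Oa\setminus Z$ because $Z$ is closed. Continuity from below then gives $\pi_\alpha(K_n)\to\pi_\alpha(\Oa\setminus Z)=1$. Alternatively, one could invoke inner regularity of a finite Borel measure on $\R^{\da}$ directly.

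Given $\delta>0$, pick $n$ with $\pi_\alpha(K_n)\ge 1-\delta$ and set $\Omega_\delta:=K_n$. Since $\Omega_\delta$ is a compact subset of $\Oa\setminus Z$, Proposition~\ref{prop:ae} applies with $K=\Omega_\delta$. It supplies a finite $R_K$ such that $U_R$ is injective on $\Omega_\delta$ for all $R\ge R_K$, and we set $R(\delta):=R_K$. The conclusion "for every $R\ge R(\delta)$" is already part of Proposition~\ref{prop:ae}; it rests on the monotonicity property (b) together with the injectivity criterion (c).

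The only step needing care is that Proposition~\ref{prop:ae} really handles an arbitrary compact $K\subseteq\Oa\setminus Z$, including pairs whose separation $\norm{\alpha_1-\alpha_2}$ is small. Its proof covers this: the local bound holds with base point $\alpha_2\in K$, and the away-from-the-diagonal set $\{(\alpha_1,\alpha_2)\in K\times K:\norm{\alpha_1-\alpha_2}\ge\epsilon\}$ is compact. So no new argument is required, and the main work is the measure-theoretic exhaustion above.

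Finally, we note that $R(\delta)$ generally diverges as $\delta\downarrow0$, because $c_K$ degenerates as $K_n$ approaches $Z$. The statement asserts no rate, only finiteness for each fixed $\delta$.
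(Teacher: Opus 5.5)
Your proposal is correct and is essentially the paper's proof. The paper removes the open neighbourhood $T_\eta=\{\alpha\in\Oa:\operatorname{dist}(\alpha,Z)<\eta\}$ and uses continuity from above to get $\pi_\alpha(T_\eta)\downarrow 0$; your $K_n=\Oa\setminus T_{1/n}$ with continuity from below is the same set and the same argument, followed by Proposition~\ref{prop:ae} with $K=\Omega_\delta$.
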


\begin{proof}
For $\eta>0$ let
$T_\eta:=\{\alpha\in\Oa:\operatorname{dist}(\alpha,Z)<\eta\}$.  Since $Z$ is
closed, $\bigcap_{\eta>0}T_\eta=Z$, so continuity from above of the finite
measure $\pi_\alpha$ gives $\pi_\alpha(T_\eta)\downarrow\pi_\alpha(Z)=0$ as
$\eta\downarrow0$.  Fix $\eta$ with $\pi_\alpha(T_\eta)\le\delta$ and set
$\Omega_\delta:=\Oa\setminus T_\eta$, a compact subset of $\Oa\setminus Z$ with
$\pi_\alpha(\Omega_\delta)\ge1-\delta$.  Proposition~\ref{prop:ae} applied to
$K=\Omega_\delta$ yields the finite $R(\delta):=R_K$.
\end{proof}

\begin{remark}
\label{rem:blend-sharp}
Corollary~\ref{cor:blend} cannot be strengthened to a null exceptional set:
there exist kernels satisfying Assumptions~\ref{ass:domain}, \ref{ass:obseq},
and~\ref{ass:ae-immersion} for which, for \emph{every} finite $R$, the
truncation $U_R$ fails to be injective on every set of full
$\pi_\alpha$-measure.  The mechanism is that collisions may accumulate at $Z$
at ever finer scales, each resolved only by ever later eigenfunctions.
With
$\da=1$, $d_G=1$, and $Z=\{0\}$, take $u_1\propto\alpha^2$ near the origin and let each
$u_{k+1}$ carry an odd component supported on
$\{|\alpha|\asymp2^{-k}\}$. 
Then $U_\infty$ is injective and
$M_\infty(\alpha)>0$ for $\alpha\neq0$, yet for each fixed $R$ every mirror
pair $(\alpha,-\alpha)$ at scales finer than $2^{-R}$ collides under
$U_R$, a positive-measure family of collisions meeting every full-measure
set.  
Hence $R(\delta)\to\infty$ as $\delta\downarrow0$ is unavoidable in
general, and Corollary~\ref{cor:blend} interpolates between
Theorem~\ref{thm:main} ($\delta=0$, under the everywhere floor of
Assumption~\ref{ass:multi-index}) and Proposition~\ref{prop:ae}.
\end{remark}

\subsection{Symmetric Statements for the Right Eigenfunctions}
\label{sec:gamma}

Both results transfer to the right singular functions $\{v_r\}$ and the
parameter $\gamma$, using the \emph{same} singular system.  The $\gamma$-side
response map is $\Psi(\gamma):=g(\cdot,\gamma)\in\Ha$, and pairing it with the
orthonormal system $\{u_r\}\subset\Ha$ gives, by the adjoint,
\begin{equation}
   \inner{\Psi(\gamma)}{u_r}_{\Ha}
   =\int_{\Oa}g(\alpha,\gamma)^{\top}u_r(\alpha)\,d\pi_\alpha
   =(T_g^{*}u_r)(\gamma)=\sigma_r\,v_r(\gamma),
   \label{eq:gammacoef}
\end{equation}
a \emph{scalar} coefficient.  Hence $\Psi(\gamma)=\sum_r\sigma_r v_r(\gamma)u_r$,
lying in $\overline{\spn}\{u_r\}\subseteq\Ha$ for every $\gamma$ by the argument
of Lemma~\ref{lem:gradexp}, and Parseval gives the $\gamma$-analogue of
\eqref{eq:parseval},
$\norm{\Psi(\gamma_1)-\Psi(\gamma_2)}_{\Ha}^2
=\sum_{r\ge1}\sigma_r^2\bigl(v_r(\gamma_1)-v_r(\gamma_2)\bigr)^2$.
Differentiating \eqref{eq:gammacoef} in $\gamma$ and applying the entrywise
Bessel and projection argument of Lemma~\ref{lem:gradexp} to $D_\gamma g$ gives
the $\gamma$-side gradient information matrix
\[
   N_\infty(\gamma)=\sum_{r\ge1}\sigma_r^2\,Dv_r(\gamma)^{\top}Dv_r(\gamma)
   =\int_{\Oa}D_\gamma g(\alpha,\gamma)^{\top}D_\gamma g(\alpha,\gamma)\,
     d\pi_\alpha .
\]
Write $V_R(\gamma)=(v_1(\gamma),\dots,v_R(\gamma))^{\top}\in\R^{R}$,
$N_R(\gamma):=\sum_{r\le R}\sigma_r^2\,Dv_r(\gamma)^{\top}Dv_r(\gamma)$, and
$Z_\gamma:=\{\gamma\in\Og:\det N_\infty(\gamma)=0\}$.  The $\gamma$-analogue of an
Assumption is that Assumption with the roles of $\alpha$ and $\gamma$
exchanged.  Two are used below.  The $\gamma$-analogue of
Assumption~\ref{ass:obseq} asks that $g(\cdot,\gamma_1)=g(\cdot,\gamma_2)$ on
$\Oa$ imply $\gamma_1=\gamma_2$, i.e.\ that $\Psi$ be injective.  The
$\gamma$-analogue of Assumption~\ref{ass:multi-index} asks that there exist
$a_0=(\alpha_{0,1},\dots,\alpha_{0,p_\gamma})\in\Oa^{p_\gamma}$ with
$\dg\le d_G\,p_\gamma$ and $p_\gamma<\infty$ such that for every
$\gamma\in\Og$ the stacked Jacobian
\[
   D_\gamma G_2(a_0,\gamma)
      :=\Bigl[\,D_\gamma g(\alpha_{0,1},\gamma)\,;\ \cdots\ ;\,
               D_\gamma g(\alpha_{0,p_\gamma},\gamma)\,\Bigr]
      \in\R^{d_G p_\gamma\times\dg}
\]
has full column rank $\dg$.  Since both sides read off the
same singular values, $\max(R_0,R_0')$ is a truncation order in a single
ordering.

\begin{corollary}[Finite truncation]
\label{cor:gamma}
Under the $\gamma$-analogues of Assumptions~\ref{ass:domain}, \ref{ass:obseq},
and~\ref{ass:multi-index}, there is a finite $R_0'$ such that $V_R$ is injective
on $\Og$ for all $R\ge R_0'$.  If the assumptions of Theorem~\ref{thm:main}
hold as well, taking $R\ge\max(R_0,R_0')$ makes $U_R$ and $V_R$
simultaneously injective.
\end{corollary}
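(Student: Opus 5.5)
The plan is to replay the $\alpha$-side argument with the roles of the two sides exchanged. The only change is that the scalar right singular functions $v_r$ replace the vector-valued $u_r$, and $\Psi$ replaces $\Phi$.

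\emph{Building blocks.} First I record the continuity of $\Psi$. The $\gamma$-analogue of Assumption~\ref{ass:domain} gives $D_\gamma g$ jointly continuous on the compact $\Oa\times\Og$, so the integral form of the increment, exactly as in Lemma~\ref{lem:Phi-cont}, makes $\Psi$ Lipschitz into $\Ha$.

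Next I take the continuous versions $v_r=\sigma_r^{-1}T_g^{*}u_r$. Using \eqref{eq:gammacoef}, together with the closedness of $\overline{\spn}\{u_r\}$ and the continuity of $\Psi$, I obtain the Parseval identity
\[
\norm{\Psi(\gamma_1)-\Psi(\gamma_2)}_{\Ha}^2=\sum_r\sigma_r^2\bigl(v_r(\gamma_1)-v_r(\gamma_2)\bigr)^2=:\Delta^\gamma(\gamma_1,\gamma_2).
\]
Hence the $\gamma$-analogue of Assumption~\ref{ass:obseq} yields $\Delta^\gamma>0$ off the diagonal, which is the analogue of Proposition~\ref{prop:part1}.

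I then prove the analogue of Lemma~\ref{lem:gradexp}. Differentiating under the integral in \eqref{eq:gammacoef} gives $\int_{\Oa}D_\gamma g(\alpha,\gamma)^{\top}u_r(\alpha)\,d\pi_\alpha=\sigma_r Dv_r(\gamma)$. These quantities are the Fourier coefficients of $D_\gamma g(\cdot,\gamma)$ against the orthonormal system $\{u_r\}\subset\Ha$. The Bessel inequality and the projection argument, using that the difference quotients of $\Psi$ lie in the closed span, then give the closed form of $N_\infty$ stated above and the monotone convergence $N_R\nearrow N_\infty$.

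\emph{Uniform floor and the local bound.} The argument of Lemma~\ref{lem:Minf-uniform} carries over. If $v^{\top}N_\infty(\gamma)v=\int_{\Oa}\norm{D_\gamma g(\alpha,\gamma)v}^2d\pi_\alpha$ vanished, continuity and $\alpha_{0,j}\in\operatorname{supp}\pi_\alpha$ would force $D_\gamma G_2(a_0,\gamma)v=0$, contradicting full column rank. Therefore $N_\infty\succeq c_0' I_{\dg}$ on $\Og$.

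Dini's theorem on the compact $\Og$ then gives $R_1'$ with $N_{R_1'}\succeq (c_0'/2) I$. The segment argument of Lemma~\ref{lem:near} gives $\Delta^\gamma_{R_1'}\ge (c_0'/8)\norm{\gamma_1-\gamma_2}^2$ for $\norm{\gamma_1-\gamma_2}<\epsilon'$.

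\emph{Main obstacle.} This step is the main obstacle: the segment argument needs convexity of $\Og$, which the $\gamma$-analogue of Assumption~\ref{ass:domain} supplies; the original Assumption~\ref{ass:domain} only asked for $\Og$ compact. If convexity is only local, I would instead argue with a local quadratic bound at the base point, via a Taylor expansion within $\epsilon'$-balls intersected with $\Og$. That bound still requires segments inside $\Og$, so I rely on the assumed convexity.

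\emph{Conclusion.} The away-from-the-diagonal step follows the proof of Theorem~\ref{thm:main} verbatim. On the compact set $\{\norm{\gamma_1-\gamma_2}\ge\epsilon'\}$, $\Delta^\gamma$ attains a positive minimum, and Dini's theorem gives $R_0'\ge R_1'$ with $\Delta^\gamma_R>0$ there for all $R\ge R_0'$. Monotonicity and the injectivity criterion (property (c), valid since $\sigma_r>0$) then give injectivity of $V_R$ for $R\ge R_0'$.

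For the simultaneous statement, both truncations index the same singular values in the same order. Since injectivity persists for all larger $R$ on each side, any $R\ge\max(R_0,R_0')$ makes $U_R$ and $V_R$ injective together.
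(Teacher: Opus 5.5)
Your proposal is correct and is essentially the paper's proof. The paper transports the argument of Theorem~\ref{thm:main} through the dictionary $(\Phi,u_r,M_R,\Oa)\to(\Psi,v_r,N_R,\Og)$, using the Parseval identity, the gradient expansion, the floor $N_\infty\succeq c_0' I_{\dg}$, Lemma~\ref{lem:near} and the off-diagonal Dini step, with the shared singular ordering giving the simultaneous statement. The obstacle you flag is resolved as you say: the $\gamma$-analogue of Assumption~\ref{ass:domain} makes $\Og$ convex, which the segment argument genuinely needs even though the paper's proof cites only compactness of $\Og$.
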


\begin{proof}
The proof of Theorem~\ref{thm:main} applies verbatim under the dictionary
\[
   (\Phi,\{u_r\},M_R,M_\infty,Z,\Oa,\pi_\alpha)\;\longrightarrow\;
   (\Psi,\{v_r\},N_R,N_\infty,Z_\gamma,\Og,\pi_\gamma):
\]
the Parseval identity
displayed above replaces \eqref{eq:parseval}, so
Proposition~\ref{prop:part1} transfers with the $\gamma$-analogue of
Assumption~\ref{ass:obseq} supplying injectivity of $\Psi$ and hence of
$V_\infty$; the gradient expansion for $D_\gamma g$ noted above replaces
Lemma~\ref{lem:gradexp}; the $\gamma$-analogue of
Assumption~\ref{ass:multi-index} yields the floor
$N_\infty(\gamma)\succeq c_0' I_{\dg}$ on $\Og$ by the argument of
Lemma~\ref{lem:Minf-uniform}; and Lemma~\ref{lem:near} together with the
Dini argument of Theorem~\ref{thm:main} use only these objects and the
compactness of $\Og$.
\end{proof}

\begin{corollary}[Almost-everywhere injectivity]
\label{cor:gamma-ae}
Under the $\gamma$-analogues of Assumptions~\ref{ass:domain}, \ref{ass:obseq},
and~\ref{ass:ae-immersion}, for every compact $K\subseteq\Og\setminus Z_\gamma$
there is a finite $R_K'$ with $V_R$ injective on $K$ for all $R\ge R_K'$.  Hence
$\pi_\gamma$-almost every $\gamma$ has a neighborhood on which a finite truncation
of eigenfunctions is injective.
\end{corollary}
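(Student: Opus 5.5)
The plan is to transfer the proof of Proposition~\ref{prop:ae} to the $\gamma$-side using the dictionary from the proof of Corollary~\ref{cor:gamma}:
\[
   (\Phi,\{u_r\},M_R,M_\infty,Z,\Oa,\pi_\alpha)\;\longrightarrow\;
   (\Psi,\{v_r\},N_R,N_\infty,Z_\gamma,\Og,\pi_\gamma).
\]
The first step supplies the $\gamma$-side ingredients, all already set out in Section~\ref{sec:gamma}.

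\emph{Step 1: separation functions.} The Parseval identity $\norm{\Psi(\gamma_1)-\Psi(\gamma_2)}_{\Ha}^2=\sum_r\sigma_r^2(v_r(\gamma_1)-v_r(\gamma_2))^2$ gives the separation functions $\Delta_R'$. These are continuous, nondecreasing in $R$, and increase to the continuous limit $\Delta'$, by the $\gamma$-version of Lemma~\ref{lem:Phi-cont}. Together with the $\gamma$-analogue of Assumption~\ref{ass:obseq}, Parseval makes $\Delta'>0$ off the diagonal. The criterion ``$V_R$ injective on $K$ iff $\Delta_R'>0$ for distinct points of $K$'' then holds as property (c).

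\emph{Step 2: continuity of $N_\infty$ and closedness of $Z_\gamma$.} The gradient expansion for $D_\gamma g$ yields the closed form $N_\infty(\gamma)=\int_{\Oa}D_\gamma g^{\top}D_\gamma g\,d\pi_\alpha$. Joint continuity of $D_\gamma g$, which comes from the $\gamma$-analogue of Assumption~\ref{ass:domain}, makes $N_\infty$ continuous, so $Z_\gamma$ is closed.

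\emph{Step 3: near-diagonal bound on $K$.} Fix a compact $K\subseteq\Og\setminus Z_\gamma$. Set $c_K':=\min_K\lambda_{\min}(N_\infty)>0$, which is positive by continuity and compactness. Dini's theorem applied to $v^{\top}N_Rv\nearrow v^{\top}N_\infty v$ on the compact $\Og$, combined with a finite cover of the unit sphere, gives $R_1'$ with $N_{R_1'}\succeq (c_K'/2)I$ on $K$. The segment argument of Lemma~\ref{lem:near}, with the floor used only at the base point $\gamma_2\in K$, then gives $\epsilon>0$ such that $\Delta'_{R_1'}(\gamma_1,\gamma_2)\ge (c_K'/8)\norm{\gamma_1-\gamma_2}^2$ whenever $\norm{\gamma_1-\gamma_2}<\epsilon$.

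\emph{Step 4: away from the diagonal.} On $\{(\gamma_1,\gamma_2)\in K\times K:\norm{\gamma_1-\gamma_2}\ge\epsilon\}$, $\Delta'$ has a positive minimum. Dini's theorem gives uniform convergence of $\Delta_R'$ there, so $\Delta_R'>0$ on that set for all $R\ge R_K'\ge R_1'$.

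\emph{Step 5: conclusion.} Monotonicity combined with Steps 3 and 4 gives $\Delta_R'>0$ on all distinct pairs in $K$, so $V_R$ is injective on $K$. Since $Z_\gamma$ is closed and $\pi_\gamma(Z_\gamma)=0$, almost every $\gamma$ has a compact neighborhood contained in $\Og\setminus Z_\gamma$, which gives the second statement.

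\emph{Main obstacle.} The delicate point is the segment argument in Step 3, which used convexity of $\Oa$. Assumption~\ref{ass:domain} grants $\Og$ only compactness, so the $\gamma$-analogue must be read as including convexity of $\Og$. Failing that, I would restrict to pairs in a convex ball around $\gamma_2$ contained in $\Og$. If even that is unavailable near the boundary, I would replace the segment by a rectifiable path of length comparable to $\norm{\gamma_1-\gamma_2}$. I would state this explicitly rather than rely on it silently.
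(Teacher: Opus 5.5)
Your proposal is correct and is essentially the paper's proof. The paper says only that the proof of Proposition~\ref{prop:ae} transfers under the dictionary $(\Phi,\{u_r\},M_R,M_\infty,Z,\Oa,\pi_\alpha)\to(\Psi,\{v_r\},N_R,N_\infty,Z_\gamma,\Og,\pi_\gamma)$, and your Steps 1--5 unpack that transfer accurately. You are also right about convexity: the $\gamma$-analogue of Assumption~\ref{ass:domain} exchanges the roles of $\alpha$ and $\gamma$, so it makes $\Og$ compact and convex, the segment argument goes through directly, and your fallback constructions are not needed.
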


\begin{proof}
The proof of Proposition~\ref{prop:ae} applies under the same dictionary as
in the proof of Corollary~\ref{cor:gamma}, with the $\gamma$-analogue of
Assumption~\ref{ass:ae-immersion} reading $\pi_\gamma(Z_\gamma)=0$.
\end{proof}

\begin{remark}
\label{rem:transpose}
Neither corollary is obtained by applying Theorem~\ref{thm:main} or
Proposition~\ref{prop:ae} to the transposed kernel
$\tilde g(\gamma,\alpha):=g(\alpha,\gamma)$.  When $d_G=1$ that route is
available: $T_{\tilde g}=T_g^{*}$, and the singular system of $\tilde g$ is
that of $g$ with the roles of $u_r$ and $v_r$ exchanged.  When $d_G>1$ it is
not: the operator
$T_{\tilde g}:L^{2}_{\pi}(\Oa)\to\Hg$,
$(T_{\tilde g}\phi)(\gamma)=\int_{\Oa}g(\alpha,\gamma)\phi(\alpha)\,d\pi_\alpha$,
carries its own singular system, and its scalar functions on $\Oa$
diagonalise the pooled kernel
$\int_{\Og}g(\alpha_1,\gamma)^{\top}g(\alpha_2,\gamma)\,d\pi_\gamma$; in
general this is a different decomposition from \eqref{eq:svd}, with
different singular values and functions.  The mirrored argument above stays
within the single system of \eqref{eq:svd}, whose $\gamma$-side coefficients
are read through the adjoint \eqref{eq:gammacoef}; this is also what makes
$\max(R_0,R_0')$ meaningful in one ordering.
\end{remark}

\begin{proof}[Proof of Theorem~\ref{thm:headline}]
(i) Corollary~\ref{cor:blend} provides $\Omega_\delta^\alpha$ and a finite
order $R_\alpha(\delta)$.  The same construction on the $\gamma$-side, with
the sets $T_\eta$ built around the closed $Z_\gamma$ and
Corollary~\ref{cor:gamma-ae} in place of Proposition~\ref{prop:ae},
provides $\Omega_\delta^\gamma$ and $R_\gamma(\delta)$.  Take
$R(\delta)=\max\{R_\alpha(\delta),R_\gamma(\delta)\}$, well defined in a
single ordering since both sides read the same singular values.
(ii) Theorem~\ref{thm:main} gives $R_0$ and Corollary~\ref{cor:gamma} gives
$R_0'$; take $R^{*}=\max(R_0,R_0')$.  The proxy-validity conclusion is
Theorem~\ref{th:proxyID} applied with $U=U_R$ and $V=V_R$ on the indicated
domains.
\end{proof}

\subsection{Additive Kernels}
\label{sec:additive}

Estimating the normalised regression $g_0$ on the proxy coordinates is a
nonparametric regression in $(d_G+1)R$ variables, which for large $R$ suffers the
usual curse of dimensionality.  This subsection shows that when the kernel is
\emph{additive} across types, the entire construction
respects that structure. 
The eigenfunctions themselves are additive, their
coordinate components are injective coordinate-by-coordinate, and the
normalised regression decomposes into components each depending on one
scalar unit type and one scalar time type.  This is the identification
content of Theorems~4 and~5 of \citet{freeman2026inference}, where it
licenses backfitting estimation of $g_0$.

\begin{assumption}[Additive kernel on a product support]
\label{ass:additive}
$\da=\dg=d$; the supports are products of compact intervals,
$\Oa=\prod_{k=1}^{d}\Omega_{\alpha,k}$ and
$\Og=\prod_{k=1}^{d}\Omega_{\gamma,k}$; and
\[
   g(\alpha,\gamma)=\sum_{k=1}^{d}h_k(\alpha_k,\gamma_k)
   \qquad\text{for functions }
   h_k:\Omega_{\alpha,k}\times\Omega_{\gamma,k}\to\R^{d_G} .
\]
\end{assumption}

\noindent
A product of compact intervals is compact, convex, and connected, so
Assumption~\ref{ass:additive} is compatible with
Assumption~\ref{ass:domain}.  No smoothness of the $h_k$ is assumed: each
$h_k$ equals a bivariate slice of $g$ net of a constant (fix a reference
point $(\alpha^{\circ},\gamma^{\circ})\in\Oa\times\Og$ and vary only the
$k$-th coordinates), so the $h_k$ inherit the continuity and
differentiability of $g$ and are identified up to additive constants.
Write $\pi_{\gamma,k}$ for the $k$-th marginal of $\pi_\gamma$ and
$\bar v_{r,k}(c):=\mathbb{E}[v_r(\gamma_t)\mid\gamma_{t,k}=c]$; no
independence across the coordinates of $\alpha_i$ or of $\gamma_t$ is
imposed.

\begin{proposition}[Additive eigenfunctions]
\label{prop:additive-eigen}
Suppose Assumptions~\ref{ass:domain} and~\ref{ass:additive} hold.  For
every $r$ with $\sigma_r>0$ there are continuous $\R^{d_G}$-valued functions
$u_{r,k}$ on $\Omega_{\alpha,k}$, $C^1$ under
Assumption~\ref{ass:domain}, such that
\[
   u_r(\alpha)=\sum_{k=1}^{d}u_{r,k}(\alpha_k)
   \quad\text{for every }\alpha\in\Oa,
   \qquad
   u_{r,k}(s)=\sigma_r^{-1}\int_{\Omega_{\gamma,k}}
   h_k(s,c)\,\bar v_{r,k}(c)\,d\pi_{\gamma,k}(c),
\]
and symmetrically $v_r(\gamma)=\sum_{k=1}^{d}v_{r,k}(\gamma_k)$.  The
decomposition is unique up to constants summing to zero: if
$\sum_{k}w_k(\alpha_k)=0$ on $\Oa$ with each $w_k$ continuous, then every
$w_k$ is constant.  In particular the components are identified from the
$u_r$ once normalised, say $u_{r,k}(\alpha^{\circ}_k)=0$ for $k\ge2$.
\end{proposition}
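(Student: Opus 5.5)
The plan is to start from the singular-function identity $u_r=\sigma_r^{-1}T_gv_r$. We take the continuous versions supplied by Assumption~\ref{ass:domain}, as in Proposition~\ref{prop:part1}, and substitute the additive form of Assumption~\ref{ass:additive}. For every $\alpha\in\Oa$ this gives
\[
   u_r(\alpha)=\sigma_r^{-1}\sum_{k=1}^{d}\int_{\Og}h_k(\alpha_k,\gamma_k)\,v_r(\gamma)\,d\pi_\gamma(\gamma).
\]
Each summand depends on $\gamma$ only through $\gamma_k$ in its $h_k$ factor. We therefore disintegrate $\pi_\gamma$ along the $k$-th coordinate: condition on $\gamma_k=c$ and apply the tower property. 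This replaces $v_r(\gamma)$ by $\bar v_{r,k}(c)=\mathbb{E}[v_r(\gamma_t)\mid\gamma_{t,k}=c]$ and integrates against the marginal $\pi_{\gamma,k}$. The result is exactly the displayed formula for $u_{r,k}$. No independence across coordinates is needed, because only one coordinate is conditioned on per term.

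Two measure-theoretic points need care. First, $\bar v_{r,k}$ is defined only $\pi_{\gamma,k}$-a.e. This is harmless, since it enters only inside an integral, and $\bar v_{r,k}\in L^2(\pi_{\gamma,k})$ by Jensen. Second, each $h_k$ must be bounded and jointly continuous so that the integrals are finite and continuous in $s$. For this we use the reference-point construction noted after the assumption: $h_k(s,c)$ equals $g$ evaluated at $\alpha^\circ,\gamma^\circ$ with the $k$-th coordinates replaced by $(s,c)$, minus a constant. Hence $h_k$ inherits continuity, and $C^1$ in $s$ with $\partial_s h_k$ jointly continuous, from Assumption~\ref{ass:domain}. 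The constants are not a problem: a constant added to $h_k$ contributes $\sigma_r^{-1}\,\text{const}\cdot\mathbb{E}[v_r]$, a constant vector that we may absorb into one component. The cleanest route is to define $u_{r,k}$ from the normalised $h_k$, so that their sum reproduces $u_r$ exactly. Continuity and $C^1$-smoothness of $u_{r,k}$ then follow by dominated convergence and differentiation under the integral sign, exactly as in the proof of Lemma~\ref{lem:gradexp}, using boundedness on the compact $\Omega_{\alpha,k}\times\Omega_{\gamma,k}$.

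The $v_r$ statement is the mirror argument. It uses $\sigma_rv_r=T_g^*u_r$ from \eqref{eq:gammacoef}, where $v_r(\gamma)=\sigma_r^{-1}\sum_k\int h_k(\alpha_k,\gamma_k)^\top u_r(\alpha)\,d\pi_\alpha$, and conditions on $\alpha_k$.

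For uniqueness, suppose $\sum_k w_k(\alpha_k)=0$ on the product $\Oa$. The product structure lets us vary one coordinate while freezing the others at $\alpha^\circ$. This gives $w_k(s)=-\sum_{j\ne k}w_j(\alpha^\circ_j)$ for all $s\in\Omega_{\alpha,k}$, so each $w_k$ is constant. The constants sum to zero by evaluating at $\alpha^\circ$. Applying this to the difference of two decompositions, and imposing the normalisation $u_{r,k}(\alpha^\circ_k)=0$ for $k\ge2$, pins down every component. The $k=1$ component is then determined as $u_r$ minus the rest.

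The only delicate step is the disintegration and the handling of additive constants. The product-support hypothesis is essential for the uniqueness step and for making sense of the slices.
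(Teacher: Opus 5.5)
Your proposal is correct and follows the paper's proof essentially step for step. You substitute the additive kernel into $u_r=\sigma_r^{-1}T_gv_r$, condition on $\gamma_k$ to pass to the marginal with $\bar v_{r,k}$, obtain continuity and $C^1$ by dominated convergence and differentiation under the integral, mirror the argument through the adjoint for $v_r$, and prove uniqueness by freezing the other coordinates at $\alpha^\circ$. The differences are minor: you read the identity pointwise from the continuous version of $u_r$ rather than from equality in $\Ha$ plus full support of $\pi_\alpha$, and your detour through additive constants and a renormalised $h_k$ is unnecessary, since the given $h_k$ already differs from a slice of $g$ by a constant and any decomposition $g=\sum_k h_k$ reproduces $u_r$ exactly with the displayed formula.
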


\begin{proof}
By Assumption~\ref{ass:additive}, for every
$\alpha\in\Oa$,
\[
   (T_g v_r)(\alpha)
   =\int_{\Og}g(\alpha,\gamma)\,v_r(\gamma)\,d\pi_\gamma(\gamma)
   =\sum_{k=1}^{d}\int_{\Og}h_k(\alpha_k,\gamma_k)\,v_r(\gamma)\,
    d\pi_\gamma(\gamma),
\]
and since the $k$-th integrand depends on $\gamma$ only through its $k$-th
coordinate, conditioning on $\gamma_k$ (iterated expectations) reduces each
term to an integral against the marginal: with
$\bar v_{r,k}(c)=\mathbb{E}[v_r(\gamma_t)\mid\gamma_{t,k}=c]$,
\[
   \int_{\Og}h_k(\alpha_k,\gamma_k)\,v_r(\gamma)\,d\pi_\gamma(\gamma)
   =\int_{\Omega_{\gamma,k}}h_k(\alpha_k,c)\,
     \bar v_{r,k}(c)\,d\pi_{\gamma,k}(c),
\]
each summand a function of $\alpha_k$ alone.  Since
$u_r=\sigma_r^{-1}T_g v_r$ in $\Ha$ and both sides are continuous while
$\pi_\alpha$ has full support (density bounded below), the identity holds at
every $\alpha$.  Continuity of $u_{r,k}$ follows from dominated convergence
($h_k$ is bounded and continuous, $\bar v_{r,k}\in L^1(\pi_{\gamma,k})$),
and $C^1$ from differentiation under the integral, $\partial_s h_k$ being
bounded and continuous under Assumption~\ref{ass:domain}.  For uniqueness,
fix the reference point $\alpha^{\circ}$ and, given $k$, evaluate
$\sum_j w_j(\alpha_j)=0$ at the point with $j$-th coordinate
$\alpha^{\circ}_j$ for $j\neq k$ and $k$-th coordinate $s$ free, which the
product support permits: $w_k(s)=-\sum_{j\neq k}w_j(\alpha^{\circ}_j)$ is
constant.  The statement for $v_r$ follows from the adjoint by the same computation:
$(T_g^{*}u_r)(\gamma)=\int_{\Oa}g(\alpha,\gamma)^{\top}u_r(\alpha)\,d\pi_\alpha
=\sum_k\int_{\Omega_{\alpha,k}}\bar u_{r,k}(s)^{\top}h_k(s,\gamma_k)\,
d\pi_{\alpha,k}(s)$, with
$\bar u_{r,k}(s):=\mathbb{E}[u_r(\alpha_i)\mid\alpha_{i,k}=s]$, each summand a
function of $\gamma_k$ alone; divide by $\sigma_r$.
\end{proof}

\begin{theorem}[Additive extension]
\label{thm:additive}
Suppose Assumptions~\ref{ass:domain}, \ref{ass:obseq},
\ref{ass:multi-index}, their $\gamma$-analogues, and
Assumption~\ref{ass:additive} hold, and let $R\ge\max(R_0,R_0')$ with $R_0$,
$R_0'$ as in Theorem~\ref{thm:main} and Corollary~\ref{cor:gamma}.  Define
the coordinate truncations
\begin{align*}
   U^{(k)}_R(s)&:=\bigl(u_{1,k}(s)^{\top},\dots,u_{R,k}(s)^{\top}\bigr)^{\top},
   \qquad s\in\Omega_{\alpha,k},\\
   V^{(k)}_R(c)&:=\bigl(v_{1,k}(c),\dots,v_{R,k}(c)\bigr)^{\top},
   \qquad c\in\Omega_{\gamma,k},
\end{align*}
taking values in $\R^{d_G R}$ and $\R^{R}$, respectively.  Then:
\begin{enumerate}
   \item[\textup{(i)}] each $U^{(k)}_R$ is injective on $\Omega_{\alpha,k}$
         and each $V^{(k)}_R$ is injective on $\Omega_{\gamma,k}$;
   \item[\textup{(ii)}] the stacked proxies
         $\lambda_i:=\bigl(U^{(1)}_R(\alpha_{i,1}),\dots,
         U^{(d)}_R(\alpha_{i,d})\bigr)$ and
         $f_t:=\bigl(V^{(1)}_R(\gamma_{t,1}),\dots,
         V^{(d)}_R(\gamma_{t,d})\bigr)$ are identified, continuous, and
         injective in $\alpha_i$ and $\gamma_t$, so they satisfy
         Assumption~\ref{ass:proxyID}(ii);
   \item[\textup{(iii)}] the normalised regression is additive: with
         $\lambda_{i,k}:=U^{(k)}_R(\alpha_{i,k})$ and
         $f_{t,k}:=V^{(k)}_R(\gamma_{t,k})$,
         \[
            g_0(\lambda_i,f_t):=\mathbb{E}[G_{it}\mid\lambda_i,f_t]
            =\sum_{k=1}^{d}\tilde h_k(\lambda_{i,k},f_{t,k}),
            \qquad
            \tilde h_k:=h_k\circ\bigl((U^{(k)}_R)^{-1},(V^{(k)}_R)^{-1}\bigr),
         \]
         with each $\tilde h_k$ continuous on the compact product
         $U^{(k)}_R(\Omega_{\alpha,k})\times V^{(k)}_R(\Omega_{\gamma,k})$
         of two curves.  Each component depends on one scalar unit type and
         one scalar time type, whatever $d$ and $R$.
\end{enumerate}
\end{theorem}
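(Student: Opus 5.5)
Part (i) follows from Theorem~\ref{thm:main} and Corollary~\ref{cor:gamma} by restricting to coordinate lines; (ii) and (iii) then follow from Proposition~\ref{prop:additive-eigen} and the argument of Theorem~\ref{th:proxyID}.

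\emph{Step (i).} Fix $k$ and the reference point $\alpha^{\circ}$. Suppose $U^{(k)}_R(s_1)=U^{(k)}_R(s_2)$. Define $\alpha^{(1)},\alpha^{(2)}\in\Oa$ to have $j$-th coordinate $\alpha^{\circ}_j$ for $j\neq k$ and $k$-th coordinate $s_1$, respectively $s_2$; the product support of Assumption~\ref{ass:additive} keeps both points in $\Oa$. Proposition~\ref{prop:additive-eigen} gives
\[
u_r(\alpha^{(1)})-u_r(\alpha^{(2)})=u_{r,k}(s_1)-u_{r,k}(s_2)=0
\qquad\text{for all } r\le R,
\]
because the other components cancel. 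Hence $U_R(\alpha^{(1)})=U_R(\alpha^{(2)})$. Since $R\ge R_0$, Theorem~\ref{thm:main} forces $\alpha^{(1)}=\alpha^{(2)}$, so $s_1=s_2$. The same argument with $v_{r,k}$, $\gamma^{\circ}$ and Corollary~\ref{cor:gamma} (using $R\ge R_0'$) gives injectivity of $V^{(k)}_R$.

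\emph{Step (ii).} Injectivity of the stacked map $\alpha\mapsto\lambda$ is immediate: equal stacks give equal $k$-th blocks for every $k$, so $\alpha_k=\alpha'_k$ by (i). Continuity follows from continuity of the $u_{r,k}$ in Proposition~\ref{prop:additive-eigen}.

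Identification is the delicate point. The components $u_{r,k}$ are pinned down only up to the normalisation $u_{r,k}(\alpha^{\circ}_k)=0$ for $k\ge2$, and $\alpha^{\circ}$ is not observed. I would argue as follows. The $u_r$ are identified functions of the latent type, and the additive structure of $g$ is identified from the distribution of the data because it is a property of $g_0$ in the identified proxy coordinates. Any admissible choice of normalising constants shifts each block $U^{(k)}_R$ by a constant vector. Such a shift preserves injectivity, continuity and the $\sigma$-field generated by $\lambda_i$, which is all Assumption~\ref{ass:proxyID}(ii) and Theorem~\ref{th:proxyID} use. So identification up to block-wise translation suffices, and I would make that normalisation explicit in the write-up.

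\emph{Step (iii).} Apply Theorem~\ref{th:proxyID} with $U$ the stacked coordinate map and $V$ the analogous $\gamma$-side map. This gives
\[
g_0(\lambda_i,f_t)=g(\alpha_i,\gamma_t)=\sum_k h_k(\alpha_{i,k},\gamma_{t,k}).
\]
Then substitute $\alpha_{i,k}=(U^{(k)}_R)^{-1}(\lambda_{i,k})$ and $\gamma_{t,k}=(V^{(k)}_R)^{-1}(f_{t,k})$, which is legitimate by (i). For continuity of $\tilde h_k$: each $U^{(k)}_R$ is a continuous injection from a compact interval, hence a homeomorphism onto its image, a curve. So the inverses are continuous, and $h_k$ is continuous by Assumption~\ref{ass:domain}, being a slice of $g$ minus a constant.

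The main obstacle I expect is the identification clause in (ii), because of the non-uniqueness of the additive decomposition. The injectivity argument itself is routine once the product-support restriction trick is in place.
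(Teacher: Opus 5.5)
Your proposal is correct and follows essentially the same route as the paper: for (i), restricting to the coordinate line $\alpha[s]$ and combining Proposition~\ref{prop:additive-eigen} with Theorem~\ref{thm:main} (resp.\ Corollary~\ref{cor:gamma}); for (ii), blockwise injectivity; and for (iii), Theorem~\ref{th:proxyID} together with the fact that a continuous injection on a compact set is a homeomorphism onto its image. The one difference is in the identification clause of (ii): the paper just invokes the uniqueness statement of Proposition~\ref{prop:additive-eigen} under the normalisation $u_{r,k}(\alpha^{\circ}_k)=0$, whereas you also note that any other choice of constants only translates each block and so leaves injectivity, continuity and $\sigma(\lambda_i)$ unchanged, which is a harmless and slightly more careful refinement.
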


\begin{proof}
\emph{(i).}  Fix $k$ and $s,s'\in\Omega_{\alpha,k}$ with
$U^{(k)}_R(s)=U^{(k)}_R(s')$, and let $\alpha[s]\in\Oa$ denote the point
with $k$-th coordinate $s$ and $j$-th coordinate $\alpha^{\circ}_j$ for
$j\neq k$, which lies in $\Oa$ by the product structure.  By
Proposition~\ref{prop:additive-eigen}, for every $r\le R$ the components in
coordinates $j\neq k$ cancel in the difference, so
\[
   u_r(\alpha[s])-u_r(\alpha[s'])
   =u_{r,k}(s)-u_{r,k}(s')=0 ,
\]
and note this holds for \emph{any} additive decomposition, the constants
cancelling as well.  Hence $U_R(\alpha[s])=U_R(\alpha[s'])$, and
Theorem~\ref{thm:main} ($R\ge R_0$) gives $\alpha[s]=\alpha[s']$, i.e.\
$s=s'$.  The $\gamma$-side is symmetric, via Corollary~\ref{cor:gamma}.

\emph{(ii).}  The $u_r$ are identified, hence so are
$u_{r,k}$ by the uniqueness statement of
Proposition~\ref{prop:additive-eigen} under the stated normalisation.
Injectivity of each block in its own coordinate gives injectivity of the
stacked map: if $\lambda_i=\lambda_i'$ then
$U^{(k)}_R(\alpha_{i,k})=U^{(k)}_R(\alpha_{i,k}')$ for every $k$, so
$\alpha_{i,k}=\alpha_{i,k}'$ by (i).

\emph{(iii).}  Assumption~\ref{ass:proxyID} holds: part (i) is the
maintained invariance of Model \eqref{eq:modelI}; part (ii) is established
in (ii) above; part (iii) holds with
$C=\sup_{\Oa\times\Og}\norm{(D_\alpha g,D_\gamma g)}<\infty$ under
Assumption~\ref{ass:domain} and its $\gamma$-analogue.
Theorem~\ref{th:proxyID} therefore gives
$g_0(\lambda_i,f_t)=g(\alpha_i,\gamma_t)
=\sum_k h_k(\alpha_{i,k},\gamma_{t,k})$.  By (i) each $U^{(k)}_R$ is a
continuous injection on a compact set, hence a homeomorphism onto its image
with continuous inverse, and likewise $V^{(k)}_R$; substituting
$\alpha_{i,k}=(U^{(k)}_R)^{-1}(\lambda_{i,k})$ and
$\gamma_{t,k}=(V^{(k)}_R)^{-1}(f_{t,k})$ gives the display, with
$\tilde h_k$ continuous as a composition of continuous maps.
\end{proof}

\section{Conclusion}

Theorem~\ref{thm:main} replaces a global univalence problem by a local one.
Under Assumption~\ref{ass:obseq} the global injectivity of the limit map is
\emph{free} (Proposition~\ref{prop:part1}), so the only substantive assumption is
\emph{local} immersion (Assumption~\ref{ass:multi-index}).
The refinement to a finite truncation is then purely topological, via monotonicity of $\Delta_R$
and Dini's theorem on the compact $\Oa\times\Oa$. 
The truncation order $R_0$ is governed by the rate at which $M_R\to M_\infty$
relative to the floor $c_0$ and by the decay of the singular values $\sigma_r$;
faster decay and a larger $c_0$ both reduce it.

\appendix

\renewcommand{\thetheorem}{A.\arabic{theorem}}\renewcommand{\theHtheorem}{A.\arabic{theorem}} \setcounter{theorem}{0} %
\renewcommand{\thelemma}{A.\arabic{lemma}} \setcounter{lemma}{0} %
\renewcommand{\theproposition}{A.\arabic{proposition}} %
\setcounter{proposition}{0} \renewcommand{\theequation}{A.\arabic{equation}} %
\setcounter{equation}{0} \renewcommand{\theassumption}{A.\arabic{assumption}}
\setcounter{assumption}{0}

\section{Primitive smoothness conditions}\label{sect:sobolev}

Assumption~\ref{ass:domain} requires each $u_r$ to be $C^1$.  No
separate condition is needed for this. Since $u_r=\sigma_r^{-1}T_g v_r$ and
$v_r=\sigma_r^{-1}T_g^{*}u_r$, derivatives fall on the kernel inside the
integral, so the singular functions inherit the smoothness of $g$.  The
following theorem, adapted from \citet[Lemmata~3.2
and~3.7]{griebel2014approximation} and
\citet[Proposition~3.5]{griebel2019singular} to our weighted measures, makes
this primitive; part~(iii) uses the sharp eigenvalue-decay rate of the latter,
which improves Theorem~3.3 of the former by an additive $1/2$ in the
singular-value exponent.\footnote{The exponent printed in their
Proposition~3.5 reads $\min\{p_1/n_1,p_2/n_2\}$; their Corollary~3.4, from
which the proposition follows, gives $\max\{p_1/n_1,p_2/n_2\}$, i.e.\
$p/\min(\da,\dg)$ here. The proof below is self-contained and does not rely
on the printed statement.}

\begin{theorem}
\label{thm:primitive}
Let $\Oa$ and $\Og$ be closures of bounded Lipschitz domains (for $\Oa$ this
follows from compactness and convexity once $\Oa$ has nonempty interior, which
the density bounds imply), and let $g\in H^p(\Oa\times\Og;\R^{d_G})$ for some $p\in\N$.
Then:
\begin{enumerate}
   \item[\textup{(i)}] \textup{(Inheritance.)}  For every $r$ with
         $\sigma_r>0$, $u_r\in H^p(\Oa;\R^{d_G})$ and $v_r\in H^p(\Og)$, with
         \[
            \norm{u_r}_{H^p(\Oa;\R^{d_G})}\lesssim\sigma_r^{-1}\norm{g}_{H^p},
            \qquad
            \norm{v_r}_{H^p(\Og)}\lesssim\sigma_r^{-1}\norm{g}_{H^p}.
         \]
   \item[\textup{(ii)}] \textup{(Embedding.)}  If $p>1+(\da+\dg)/2$, then the
         continuous version of $g$ is $C^1$ on $\Oa\times\Og$ with
         $D_\alpha g$ and $D_\gamma g$ jointly continuous, so the
         smoothness required by Assumption~\ref{ass:domain} holds.  If merely
         $p>1+\da/2$, then already each $u_r\in C^1(\Oa)$ by \textup{(i)}
         (symmetrically, $p>1+\dg/2$ gives each $v_r\in C^1(\Og)$).
   \item[\textup{(iii)}] \textup{(Spectral decay.)}  The tail sums obey
         $\sum_{r>M}\sigma_r^2\lesssim M^{-2p/\min(\da,\dg)}\,\norm{g}_{H^p}^2$,
         and consequently
         $\sigma_r\lesssim r^{-p/\min(\da,\dg)-1/2}\,\norm{g}_{H^p}$.
\end{enumerate}
\end{theorem}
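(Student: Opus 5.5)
The proof handles the three parts separately. Each rests on the representations $u_r=\sigma_r^{-1}T_gv_r$ and $v_r=\sigma_r^{-1}T_g^{*}u_r$. Lemma~\ref{lem:weighted} lets us move freely between weighted and unweighted Sobolev norms, since the densities are bounded above and below.

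\emph{Part (i).} Fix a multi-index $s$ with $|s|\le p$. We want to justify
\[
   D^s u_r(\alpha)=\sigma_r^{-1}\int_{\Og}D^s_\alpha g(\alpha,\gamma)\,v_r(\gamma)\,d\pi_\gamma(\gamma)
\]
in the weak sense. To do so, test against $\varphi\in C_c^\infty(\operatorname{int}\Oa)$ and apply Fubini: the weak $\alpha$-derivative of $g$ lies in $L^2(\Oa\times\Og)$, and $g\in H^p$ of the product gives $D^s_\alpha g\in L^2$ for all $|s|\le p$. Cauchy--Schwarz in $\gamma$, using $\norm{v_r}_{\Ltg}=1$, then gives
\[
   \norm{D^s u_r}_{L^2_\pi(\Oa)}^2\le\sigma_r^{-2}\int\!\!\int\norm{D^s_\alpha g}^2\,d\pi_\gamma\,d\pi_\alpha .
\]
Summing over $|s|\le p$ and applying Lemma~\ref{lem:weighted} yields $\norm{u_r}_{H^p}\lesssim\sigma_r^{-1}\norm{g}_{H^p}$. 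The bound for $v_r$ follows the same way through $T_g^*$, where the integrand is the contraction $g^\top u_r$ and $\norm{u_r}_{\Ha}=1$.

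\emph{Part (ii).} This is the Sobolev embedding $H^p\hookrightarrow C^1(\overline\Omega)$, valid for $p>1+n/2$ on Lipschitz domains. We first check that a compact convex set with nonempty interior is the closure of a Lipschitz domain, and that a product of Lipschitz domains is Lipschitz. We then apply the embedding with $n=\da+\dg$ to $g$, and with $n=\da$ (resp. $\dg$) to $u_r$ (resp. $v_r$) via part (i).

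\emph{Part (iii).} This is the main obstacle. The approach goes through best approximation. By the Eckart--Young--Schmidt theorem,
\[
   \sum_{r>M}\sigma_r^2=\min_{\rank\le M}\norm{g-g_M}_{L^2_\pi}^2 .
\]
It therefore suffices to exhibit one rank-$M$ approximant with error $\lesssim M^{-p/n}\norm{g}_{H^p}$, where $n=\min(\da,\dg)$. Suppose without loss of generality $n=\da$. Take an $M$-dimensional approximation space $W_M\subset L^2(\Oa)$, for instance piecewise polynomials of degree below $p$ on a quasi-uniform mesh of width $h\asymp M^{-1/\da}$, with projector $P_M$ satisfying $\norm{f-P_Mf}_{L^2}\lesssim M^{-p/\da}\norm{f}_{H^p(\Oa)}$. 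Set $g_M:=(P_M\otimes I)g$, projecting in $\alpha$ only; this has rank at most $d_GM$. Applying the projector bound for each fixed $\gamma$ and integrating over $\gamma$ gives
\[
   \norm{g-g_M}^2\lesssim M^{-2p/\da}\int_{\Og}\norm{g(\cdot,\gamma)}_{H^p(\Oa)}^2\,d\gamma\le M^{-2p/\da}\norm{g}_{H^p}^2 ,
\]
and Lemma~\ref{lem:weighted} handles the weights. Replacing $M$ by $M/d_G$ changes only constants.

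For the pointwise rate, monotonicity of the singular values gives
\[
   \tfrac{M}{2}\,\sigma_M^2\le\sum_{M/2<r\le M}\sigma_r^2\lesssim M^{-2p/n}\norm{g}_{H^p}^2 ,
\]
hence $\sigma_M\lesssim M^{-p/n-1/2}\norm{g}_{H^p}$.

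The delicate point is the weighted setting. The projector is orthogonal in unweighted $L^2$, but Eckart--Young must be applied in $L^2_\pi$. This is harmless, because any rank-$M$ function gives an upper bound on the $L^2_\pi$ minimum, and the two norms are equivalent.
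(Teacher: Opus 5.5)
Your proposal is correct and follows the paper's proof essentially step for step. Part (i) uses weak differentiation under the integral and Cauchy--Schwarz; part (ii) uses the Sobolev embedding on the (product) Lipschitz domains; part (iii) uses Schmidt/Eckart--Young with the slicewise piecewise-polynomial approximant $(P_M\otimes I)g$ of rank at most $d_GM$ on the lower-dimensional factor, followed by the dyadic monotonicity trick. The only differences are cosmetic: you use degree $<p$ rather than $\lceil p\rceil$ (both give the $h^p$ rate), and you say explicitly that the unweighted projector is harmless in the weighted Eckart--Young bound, which the paper covers through Lemma~\ref{lem:weighted}.
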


\begin{proof}
(i)  For a multi-index $s$ with $|s|\le p$, differentiating under the integral
sign in \eqref{eq:Tg} (weak derivatives, Fubini) gives
$D^s_\alpha(T_g\phi)(\alpha)
=\int_{\Og}D^s_\alpha g(\alpha,\gamma)\,\phi(\gamma)\,d\pi_\gamma(\gamma)$, and
by Cauchy--Schwarz
\[
   \norm{D^s_\alpha(T_g\phi)}_{L^2(\pi_\alpha)}^2
   \le\int_{\Oa}\Bigl(\int_{\Og}\norm{D^s_\alpha g}^2 d\pi_\gamma\Bigr)
       d\pi_\alpha\;\norm{\phi}_{\Ltg}^2
   \;\lesssim\;\norm{g}_{H^p}^2\,\norm{\phi}_{\Ltg}^2,
\]
the last step passing between weighted and Lebesgue norms via
Lemma~\ref{lem:weighted}.  Hence $T_g:\Ltg\to H^p(\Oa;\R^{d_G})$ is bounded with norm
$\lesssim\norm{g}_{H^p}$ \citep[Lemma~3.2]{griebel2014approximation}, and
$u_r=\sigma_r^{-1}T_g v_r$ with $\norm{v_r}_{\Ltg}=1$ gives the first bound
\citep[Lemma~3.7]{griebel2014approximation}; the bound for $v_r$ is symmetric,
via $T_g^{*}$.  Note that only the $\alpha$-derivatives of $g$ enter the
$u$-side (and only the $\gamma$-derivatives the $v$-side), which is why
mixed-smoothness refinements apply \citep{griebel2019singular}.

(ii)  A product of bounded Lipschitz domains is a bounded Lipschitz domain, and
on such a domain of dimension $d$ the Sobolev embedding gives
$H^p\hookrightarrow C^1$ up to the closure whenever $p>1+d/2$.  Apply this on
$\Oa\times\Og$ (dimension $\da+\dg$) for $g$, and on $\Oa$ (dimension $\da$)
for $u_r$ using \textup{(i)}.

(iii)  Write $\lambda_r:=\sigma_r^2$.  By the best-approximation property of
the truncated SVD in the Hilbert--Schmidt norm (Schmidt's theorem), for
\emph{any} rank-$M$ kernel $g_M$, Lemma~\ref{lem:weighted} gives
\[
   \sum_{r>M}\lambda_r
   \;\le\;\norm{g-g_M}_{L^2(\pi_\alpha\otimes\pi_\gamma)}^2
   \;\lesssim\;\norm{g-g_M}_{L^2}^2 .
\]
Take $g_M=(P_M\otimes I)g$, with $P_M$ the $L^2$-orthogonal projection onto a
space of $M$ discontinuous piecewise polynomials of total degree
$\lceil p\rceil$ on a quasi-uniform triangulation of whichever of $\Oa$,
$\Og$ has the smaller dimension; this kernel has rank at most $d_G M$, each of
the $M$ scalar basis functions contributing at most $d_G$ directions to
$\operatorname{range}T_{g_M}$, so the tail bound is obtained at index $d_G M$ and
re-indexed at the cost of a fixed factor.  The
Bramble--Hilbert bound gives
$\norm{g-g_M}_{L^2}\lesssim M^{-p/\min(\da,\dg)}\norm{g}_{H^p}$
\citep[eq.~(3.7) and Theorem~3.2]{griebel2019singular}.  This proves the tail
bound.  Monotonicity of $\{\lambda_r\}$ then converts tails into the pointwise
rate: $M\lambda_{2M}\le\sum_{r=M+1}^{2M}\lambda_r\lesssim
M^{-2p/\min(\da,\dg)}\norm{g}_{H^p}^2$, so
$\lambda_r\lesssim r^{-2p/\min(\da,\dg)-1}\norm{g}_{H^p}^2$, which is the
claim.  This is the weighted transfer of
\citet[Proposition~3.5]{griebel2019singular}, which sharpens
\citet[Theorem~3.3]{griebel2014approximation} by the additive $1$ in the
eigenvalue exponent and is itself sharp; the weights enter only through the
constant of Lemma~\ref{lem:weighted}.
\end{proof}

\begin{remark}
\label{rem:ambient}
Theorem~\ref{thm:primitive} delivers smooth singular functions without
changing the underlying Hilbert spaces. 
The smoothness is inherited from the
kernel, not imposed through the ambient space.  This matters because
re-ambienting the SVD in a Sobolev space would change the adjoint $T_g^{*}$ and
hence the decomposition itself with different singular values and functions. 
In Proposition~\ref{prop:part1} (Parseval), the separation function
\eqref{eq:sep}, and the closed form \eqref{eq:Minf} all rely on
$L^2$-orthonormality of $\{v_r\}$; the $L^2$-SVD is also the population
counterpart of the sample singular value decomposition used in estimation.
Part~\textup{(iii)} quantifies, in addition, the singular-value decay that
governs the truncation order $R_0$ of Theorem~\ref{thm:main}.
\end{remark}

\setlength{\bibsep}{2pt} 
\bibliographystyle{chicago3}
\bibliography{refs}

\end{document}